\definecolor{DarkGreen}{rgb}{0.1,0.5,0.1}
\renewcommand*{\backref}[1]{}
\renewcommand*{\backrefalt}[4]{%
    \ifcase #1 (Not cited.)%
    \or        (Cited on page~#2)%
    \else      (Cited on pages~#2)%
    \fi}
\Crefname{property}{Property}{Properties}
\Crefname{example}{Example}{Examples}
\Crefname{table}{Table}{Tables}
\tikzset{snake it/.style={decorate, decoration=snake}}
\colorlet{mygray}{gray!40}
\let\oldnl\nl% Store \nl in \oldnl
\newcommand{\nonl}{\renewcommand{\nl}{\let\nl\oldnl}}% Remove line number for one line
\newtheorem{theorem}{Theorem}%[section]
\newtheorem{lemma}{Lemma}%[theorem]
\newtheorem{corollary}{Corollary}
\theoremstyle{definition}
\theoremstyle{remark}
\Crefname{claim}{Claim}{Claims}
\newtheorem{claim}{Claim}
\newtheorem{observation}{Observation}
\newcommand{\NPH}{\textrm{\textup{NP-hard}}}
\newcommand{\NPC}{\textrm{\textup{NP-complete}}}
\newcommand{\C}{\mathcal{C}}
\title{Multiwinner Elections under Minimax Chamberlin-Courant Rule in Euclidean Space}
\author[1]{Chinmay Sonar, Subhash Suri}
\author[2]{Jie Xue}
\affil[1]{University of California, Santa Barbara, USA\\
	{\small\texttt{\{csonar, suri\}@cs.ucsb.edu}}}
\affil[2]{New York University, Shaghai, China\\
	{\small\texttt{jiexue@nyu.edu}}}
\date{}
\begin{document}

\maketitle

\begin{abstract}
We consider multiwinner elections in Euclidean space using the minimax Chamberlin-Courant rule.
In this setting, voters and candidates are embedded in a $d$-dimensional Euclidean space,
and the goal is to choose a committee of $k$ candidates so that the \emph{rank} of any voter's
most preferred candidate in the committee is minimized. (The problem is also equivalent to the 
\emph{ordinal} version of the classical $k$-center problem.) 
We show that the problem is \NPH{} in any dimension $d \geq 2$, and also provably hard to approximate.
Our main results are three polynomial-time approximation schemes, each of which finds a committee 
with provably good minimax score. In all cases, we show that our approximation bounds are tight or close to tight.
We mainly focus on the $1$-Borda rule but some of our results also hold for the more general $r$-Borda.
\end{abstract}

\section{Introduction}
Multiwinner elections are a classical problem in social science where the goal is to choose a fixed number of candidates, also called a \emph{winning committee}, based on voters' preferences. The problem encapsulates a number of applications that range from choosing representatives in democracies to staff hiring and procurement decisions~\cite{lu2011budgeted,goel2019knapsack,skowron2016finding}. 
Many facility location problems in operations research and public goods planning are equivalent to multiwinner elections: facilities are candidates and users are voters~\cite{betzler2013computation}.

One of the central computational problems in this area is to design algorithms for computing a winning committee with \emph{provable} guarantees of representation quality. In particular, let $V$ be a set of $n$ voters, $\mathcal{C}$ 
a set of $m$ candidates, and $k$ the size of the winning committee, for a positive integer $k$. 
We focus on \emph{ordinal} preferences where each voter $v$ ranks all the candidates in $\mathcal{C}$, from the most preferred (rank $1$) to the least preferred (rank $m$). 
Elections under ordinal preferences are widely studied because in many settings, the rank ordering of candidates 
is both natural and easier to determine than a precise numerical value (cardinal preference)~\cite{brandt2016handbook,endriss2017trends,munagala2021optimal}.
We let $\sigma_v (c)$ denote the rank of candidate $c$ in $v$'s list, and use Chamberlin-Courant voting rule~\cite{chamberlin1983representative} for evaluating the score of a committee. 
In the simplest version, also called $1$-Borda, a voter $v$'s score for a committee $T$ is the rank of its most preferred committee member:
$\sigma_v(T) ~=~ \min_{c \in T} \sigma_v(c)$.
(In a generalized form called $r$-Borda, which we also consider, the score is the \emph{sum} of the ranks of the $r$ most preferred candidates in $T$.)
This scoring function is often called the \emph{misrepresentation} score to emphasize that we want to \emph{minimize} it---more preferred candidates have smaller ranks.

The goal of a multiwinner election is to find a size-$k$ committee $T \subseteq \mathcal{C}$ that optimizes some function $g(\sigma_{v_1} (T), \ldots, \sigma_{v_n} (T))$ of all the voter's scores.  Two classical choices for $g$ are the \emph{sum} and the \emph{max}. 
The former is the \emph{utilitarian} objective and seeks to minimize the \emph{sum} of the scores over all the voters. 
The latter is the \emph{egalitarian} objective and minimizes the \emph{maximum} (worst) of the scores over all the voters.  
Both versions of the multiwinner elections are \NPH{} under \emph{general preferences}~\cite{lu2011budgeted,betzler2013computation}, and as a result, an important line of research has been to examine natural settings with \emph{structured} preference spaces~\cite{betzler2013computation,yu2013multiwinner,skowron2015complexity,elkind2017structured}.

Our paper studies one such setting---and arguably one of the most natural---namely, the Euclidean space of preferences.
The geometry of Euclidean space gives an intuitive and interpretable \emph{positioning} of voters and candidates in many natural settings such as spatial voting and facility locations, but it also has important computational  advantages: when candidates and voters are embedded in $d$-space, only a tiny fraction of all (exponentially many) $m!$ candidate orderings are \emph{realizable}. In particular, the maximum number of realizable rankings is only (polynomially bounded) $O(m^{d+1})$. This important combinatorial property enables us derive much better bounds than what is possible in completely unstructured preferences spaces.
Specifically, we explore algorithmic and hardness questions for \emph{minimax} (egalitarian) multiwinner elections using the $1$-Borda and $r$-Borda rules under ordinal \emph{Euclidean} preferences. In this setting, voters and candidates are embedded in a $d$-dimensional Euclidean space, implicitly specifying each voter's ranking (closest to farthest) of the candidates. The goal is to choose a committee of $k$ candidates minimizing the rank of the worst voter's most preferred candidate. 

\medskip

\noindent
{\bf Remark:}
There are good reasons for using \emph{ordinal} preferences even when cardinal distances are implied by an Euclidean embedding.
The first is robustness: consider a voter $v$ and two candidates $c, c'$. If their distances satisfy $d(v,c) < d(v, c')$, then clearly $v$ prefers $c$ to $c'$, but it seems harder to argue that $v$'s preference varies linearly (or even smoothly) with  
distance---for instance, would doubling the distance really halve the value to a voter?
Another reason is that $k$-center solutions based on cardinal preferences are highly susceptible to the outlier
effect---a few outlying voters may control the minimax value (i.e., radius) of the optimal solution even though
all other voters have significantly better solution quality. 
By contrast, under the ordinal measure the (rank-based) solution seems more equitable because outliers are matched 
with a highly ranked candidate (irrespective of the distances).

\subsection{Our Results}

Our work shows that a number of interesting and encouraging approximation results are possible for Euclidean preferences, thus, offering new directions for research. Indeed, quoting~\cite{elkind2017structured}, ``multidimensional domain restrictions offer many challenging research questions, but fast algorithms for these classes are very desirable.'' 
A brief summary of our main results is the following:

\begin{enumerate}

\item We show that the Euclidean minimax committee problem is \NPH{} in every dimension $d \geq 2$; in one dimension, the problem is easy to solve optimally with dynamic programming. The complexity of this problem, also called the ordinal Euclidean $k$-center
problem, was not known and had been an important folklore problem. Our proof shows that the problem also hard to 
approximate in the worst-case (see Theorem \ref{thm-hardtoappx}), which stands in sharp contrast with the $2$-approximability of the \emph{cardinal} $k$-center problem~\cite{kleinberg2006algorithm}. 

\item We then show a number of efficient approximation results for the problem, starting with a polynomial time algorithm
to compute a size-$k$ committee with a minimax score of $O(m/k)$ for any instance in dimension $d=2$, and 
score of $O( (m/k) \log k)$ for any instance in dimension $d \geq 3$. These scores are
also shown to be essentially the best possible in worst-case. 

\item Our next approximation uses the bicriterion framework to design a polynomial time algorithm that achieves the 
\emph{optimal} minimax score $\sigma^\star$ possible for a size-$k$ committee by constructing slightly larger committee, 
namely, of size $(1 + \epsilon) k$ for $d=2$ and size $O(k \log m)$ for $d \geq 3$.
(We also show that increasing the committee size by an \emph{additive} constant is not sufficient.)

\item Our final approximation combines ordinal and cardinal features of the problem in a novel way, as follows.
Suppose the optimal score of the $k$ committee is $\sigma^\star$, and $d_v^\star$ is the distance of $v$ to its 
rank $\sigma^\star$ candidate. We define a committee $T$ to be \emph{$\delta$-optimal} if each voter has a 
representative in $T$ within distance $\delta d_v^\star$. (That is, for each voter the committee contains a
candidate whose distance to the voter is almost as good as distance to its $\sigma^*$ rank candidate.)
We show that a $\delta$-optimal committee can be computed in polynomial time for $\delta = 3$, 
but unless P = NP, there is no polynomial-time algorithm to compute $\delta$-optimal committees for any $\delta < 2$.

\end{enumerate}

\section{Related Work}
The literature on multi-winner elections is too large to summarize in this limited space; hence, we mostly survey the computational results closely related to our work.
For a general introduction to multi-winner elections, we refer the reader to the works of~\cite{elkind2017properties,faliszewski2017multiwinner,faliszewski2019committee}. 
% The literature on multi-winner elections is too large to summarize in this limited space, and we refer the reader to the works of~\cite{elkind2017properties,faliszewski2017multiwinner,faliszewski2019committee} for a general introduction.
The work of \cite{ijcai2018-8} studies the computational complexity and axiomatic properties of various egalitarian committee scoring rules under the general preferences.
For computing a winning committee under the Chamberlin-Courant rule, polynomial-time algorithms are known only for restricted preferences such as single-peaked, single-crossing, 1D Euclidean,
etc.~\cite{betzler2013computation,skowron2015complexity,elkind2017structured}.
Very little is known about the more general $d$-dimensional Euclidean setting considered in our paper
with the exception of a work of~\cite{godziszewski2021analysis}, which shows \NPH{}ness for the
\emph{approval set} voting rule for the utilitarian objective in $2$-dimensional Euclidean elections.

Constant factor approximations are often easier to achieve under the \emph{utilitarian} objective.
For instance,~\cite{munagala2021optimal} present several nearly-optimal approximation bounds;
~\cite{skowron2015achieving} presents a constant factor approximation for minimizing the weighted sum of 
ranks of the winning candidates; and~\cite{byrka2018proportional} presents a constant factor approximation for
minimizing the sum when $\sigma_v(c)$ is an arbitrary cardinal value.
In contrast, for the minimax objective, mostly inapproximability results are known, and only under general preferences~\cite{skowron2015achieving}.

\section{Preliminaries}
\label{sec:prelims}

Throughout the paper, an \textit{election} is a pair $E=(\mathcal{C}, V)$, where 
$\mathcal{C}=\{c_1, \ldots, c_m\}$ is the \textit{candidate} set and $V = \{v_1, \ldots, v_n\}$ is the \textit{voter} set.
The \textit{preference list} of each voter is a total ordering (ranking) of $\mathcal{C}$, in which 
the most preferred candidate has rank $1$ and the least preferred candidate has rank $m$.

We call an election $E = (\mathcal{C},V)$ a $d$-\textit{Euclidean election} if there exists a function 
$f: \mathcal{C} \cup V \to \mathbb{R}^d$, called a \textit{Euclidean realization} of $E$, such that for any pair $c_i,c_j \in \C$, a voter $v$ prefers $c_i$  to $c_j$ if and only if 
$\text{dist}(f(v),f(c_i)) < \text{dist}(f(v),f(c_j))$ where $\text{dist}(\cdot,\cdot)$ denotes the Euclidean distance (in case the candidates are equidistant, we break the ties arbitrarily).
We assume that a Euclidean realization is part of the input; the decision problem of whether an
election admits an Euclidean realization is computationally hard \cite{peters2017euclidean}.

%\medskip

We use $\sigma_v (c)$ to denote the rank of candidate $c$ in $v$'s preference list, and use the Chamberlin-Courant voting rule~\cite{chamberlin1983representative} for evaluating the score of a committee. In particular, a voter $v$'s 
score for a committee $T$ is the rank of its most preferred committee member $\sigma_v(T) ~=~ \min_{c \in T} \sigma_v(c)$,
and the \emph{Euclidean minimax committee} problem is to choose a committee of size $k$ that minimizes the
maximum score (misrepresentation) of any voter. That is, minimize the following:
$$ \sigma(T) = \max\limits_{v \in V} \left(\min\limits_{c \in T} \sigma_v(c) \right),$$
So the optimal committee score is always between $1$ and $m$.
(We mainly focus on this simpler $1$-Borda scoring function, but some of our results also hold
for the generalized $r$-Borda, as mentioned later.)

\section{Hardness Results}
\label{sec:hardness-results}

We first show that the Euclidean minimax committee problem is \NPH{} in any dimension $d \geq 2$.
After that, we extend our proof to show that the problem is even hard to approximate within any sublinear factor of $m$, where $m$ is the number of candidates in the election.

Our hardness reduction uses the \NPC{} problem \textsc{Planar Monotone 3-SAT} (PM-3SAT)~\cite{deBerg2010SAT}.
An instance of PM-3SAT consists of a \emph{monotone} 3-CNF formula $\phi$ where each clause contains either three positive literals or three negative literals, and a special ``planar embedding'' of the variable-clause incidence graph of $\phi$ described below.
In the embedding, each variable/clause is drawn as a (axis-parallel) rectangle in the plane.
The rectangles for the variables are drawn along the $x$-axis, while the rectangles for positive (resp., negative) clauses lie above (resp., below) the $x$-axis.
All the rectangles are pairwise disjoint.
If a clause contains a variable, then there is a vertical segment connecting the clause rectangle and the variable rectangle.
Each such vertical segment is disjoint from all the rectangles except the two it connects.
We call such an embedding a \textit{rectangular embedding} of $\phi$.
See Figure~\ref{fig:PM-3SAT-instance} for an illustration. 
Given $\phi$ with a rectangular embedding, the goal of PM-3SAT is to decide if there exists a satisfying assignment for $\phi$.

\begin{figure}
\centering
\begin{subfigure}[b]{0.5\textwidth}
\centering
  \includegraphics[scale = 0.4]{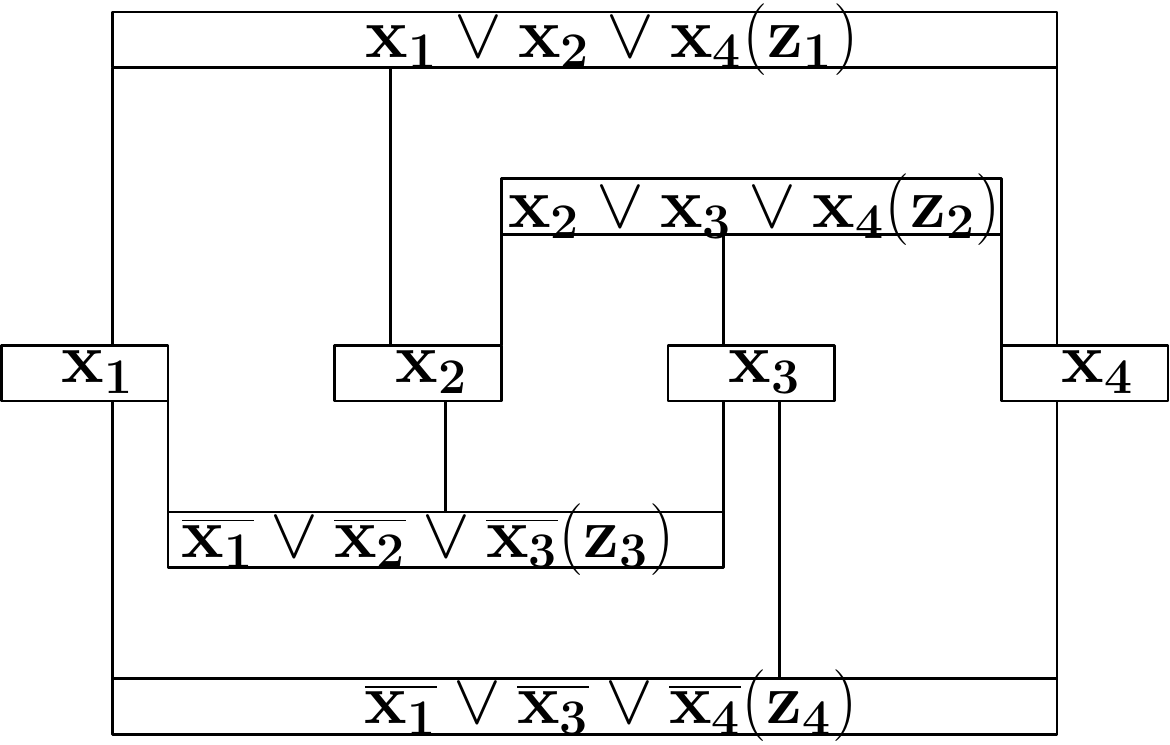}
  \caption{Rectangular embedding}
  \label{fig:PM-3SAT-instance}
\end{subfigure}
\begin{subfigure}[b]{0.4\textwidth}
    \centering
  \includegraphics[scale = 0.4]{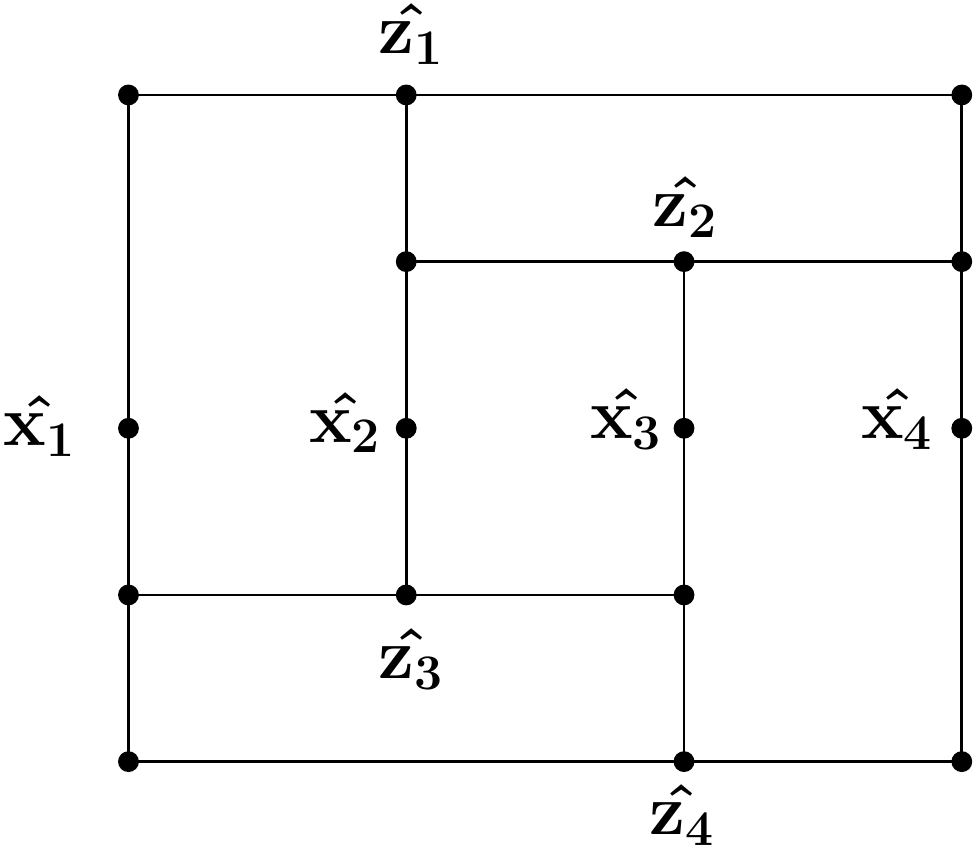}
  \caption{Orthogonal embedding}
  \label{fig:PM-3SAT-orthogonal-embedding}
\end{subfigure}
\caption{Embeddings of an example PM-3SAT instance}
\end{figure}

% \begin{minipage}{0.45\textwidth}
% \begin{figure}
%     \centering
%     \includegraphics[scale=0.25]{New Figures/}
%     \caption{Planar embedding of the PM-3SAT instance}
%     \label{fig:PM-3SAT-instance}
% \end{figure}
% \end{minipage}%
% \begin{minipage}{0.45\textwidth}
% \begin{figure}
%     \centering
%     \includegraphics[scale=0.25]{New Figures/}
%     \caption{Orthogonal embedding of the PM-3SAT instance}
%     \label{fig:PM-3SAT-orthogonal-embedding}
% \end{figure}
% \end{minipage}

Suppose we are given a PM-3SAT instance consisting of the monotone 3-CNF formula $\phi$ with a rectangular embedding.
Let $X = \{x_1,\dots,x_n\}$ be the set of variables and $Z = \{z_1,\dots,z_m\}$ be the set of clauses of $\phi$ (each of which consists of three literals).
We  construct (in polynomial time) a Euclidean minimax committee instance $(E,k)$ in $\mathbb{R}^2$ such that $\phi$ has a satisfying assignment if and only if for the election $E$ there exists a committee of size $k$ with score at most $4$.
We begin by modifying the rectangular embedding of $\phi$ to another form that we call an \textit{orthogonal embedding}.

\paragraph{Orthogonal Embedding}
First, we collapse all the variable and clause rectangles into horizontal segments: the segments for variables lie on the $x$-axis and those for positive (resp., negative) clauses lie above (resp., below) the $x$-axis.
By slightly moving the clause segments vertically, we can guarantee that all clause segments have distinct $y$-coordinates.
Consider a variable/clause segment $s$.
There are vertical segments connected to $s$; we call the intersection points of these segments with $s$ the \textit{connection points}.
We then only keep the part of $s$ that is in between the leftmost and rightmost connection points; that is, we truncate the part of $s$ that is to the left (resp., right) of the leftmost (resp., rightmost) connection point.
%See [FIGURE] for an illustration of the embedding after this modification.
Next, we shrink each variable segment $s$ to a point $\hat{s}$ and also move the vertical segments incident to $s$ accordingly.
By doing this, all vertical segments incident to $s$ are merged into a single vertical segment going through $\hat{s}$, whose two endpoints lie on the highest positive $s$-neighboring clause segment and the lowest negative $s$-neighboring clause segment; furthermore, this segment hits one endpoint of each of the other $s$-neighboring clause segments.
After shrinking all variable segments, we obtain our orthogonal embedding for $\phi$.
Figure~1b shows an illustration of the orthogonal embedding.
In the orthogonal embedding, each variable corresponds to a point on the $x$-axis (which we call the \textit{reference point} of the variable) and a vertical segment through the reference point, while each positive (resp., negative) clause corresponds to a horizontal segment above (resp., below) the $x$-axis; we call the intersection points of these vertical and horizontal segments \textit{connection points}.

The resulting embedding 
satisfies the following three conditions:
\textit{(i)} no vertical segment crosses a horizontal segment;
\textit{(ii)} each horizontal segment $s$ intersects exactly three vertical segments which correspond to the three variables contained in the clause corresponding to $s$; and
\textit{(iii)} the endpoints of all segments are \textit{connection points}.

By properties \textit{(ii)} and \textit{(iii)}, there are three connection points on each clause segment, two of which are the left and right endpoints of the segment, and we call the middle one the \textit{reference point} of the clause.
We denote by $\hat{x}_1,\dots,\hat{x}_n$ the reference points of the variables $x_1,\dots,x_n$ and denote by $\hat{z}_1,\dots,\hat{z}_m$ the reference points of the clauses $z_1,\dots,z_m$.
By shifting/scaling the segments properly without changing the topological structure of the orthogonal embedding, we can further guarantee that the $x$-coordinates (resp., $y$-coordinates) of the vertical (resp., horizontal) segments are distinct \textit{even} integers in the range $\{1,\dots,2n\}$ (resp., $\{-2m,\dots,2m\}$).
Therefore, all the connection points now have integral coordinates (which are even numbers) and the entire embedding is contained in the rectangle $[1,2n] \times [-2m,2m]$.
%Consider all points on the segments of the orthogonal embedding that have integral coordinates.
%These points partition each segment $s$ into $\ell(s)$ unit-length segments, where $\ell(s)$ is the length of $s$.
Points on the segments of the orthogonal embedding that have integral coordinates partition each segment $s$ into $\ell(s)$ unit-length segments, where $\ell(s)$ is the length of $s$.
We call these unit-length segments the \textit{pieces} of the orthogonal embedding.
Let $N$ be the total number of pieces.
Clearly, $N = O(nm)$.

Our Euclidean minimax committee instance $(E,k)$ consists of the following set of
voters and candidates in the two-dimensional plane.
(In fact, in our construction, each point is both a candidate and a voter, namely, 
$\mathcal{C} = V$. It is easy to modify the construction so that the set of voters is much larger by
simply making multiple copies of each voter.)

\paragraph{Variable gadgets.}
% The first set of candidates/voters are constructed for the variables $x_1,\dots,x_n$.
For each variable $x_i$, we choose four points near the reference point $\hat{x}_i$ as follows.
There are two (vertical) pieces incident to $\hat{x}_i$ in the orthogonal embedding, one above $\hat{x}_i$, the other below $\hat{x}_i$.
On each of the two pieces, we choose two points with distances $0.01$ and $0.02$ from $\hat{x}_i$, respectively.
We put a candidate and a voter at each of the four chosen points, and call these candidates/voters the $x_i$-\textit{gadget}.
We construct gadgets for all $x_1,\dots,x_n$.
The total number of candidates/voters in the variable gadgets is $4n$.

\paragraph{Clause gadgets.}
The second set of candidates/voters are constructed for the clauses $z_1,\dots,z_m$.
For each clause $z_i$, we put a candidate and a voter at the reference point $\hat{z}_i$, and call this candidate/voter the $z_i$-\textit{gadget}.
The total number of candidates/voters in the clause gadgets is $m$.

\paragraph{Piece gadgets.}
The last set of candidates/voters are constructed for connecting the variable gadgets and the clause gadgets.
Consider a piece $s$ of the orthogonal embedding, which is a unit-length segment.
We distinguish the two endpoints of $s$ as $s^-$ and $s^+$ as follows.
If $s$ is a vertical piece above (resp., below) the $x$-axis, let $s^-$ be the bottom (resp., top) endpoint of $s$ and $s^+$ be the top (resp., bottom) endpoint of $s$.
If $s$ is a horizontal piece, then it must belong to the horizontal segment of some clause $z_i$.
If $s$ is to the left (resp., right) of the reference point $\hat{z}_i$, let $s^-$ be the left (resp., right) endpoint of $s$ and $s^+$ be the right (resp., left) endpoint of $s$.
For every piece $s$ that is \textit{not} adjacent to any clause reference point, we choose four points on $s$ with distances $0.49$, $0.8$, $0.9$, $1$ from $s^-$ (i.e., with distances $0.51$, $0.2$, $0.1$, $0$ from $s^+$), respectively.
We put a candidate and a voter at each of the four chosen points, and call these the candidates/voters of the $s$-\textit{gadget}.
Note that we do not construct gadgets for the pieces that are adjacent to some clause reference point.
Thus, the total number of candidates/voters in the piece gadgets is $4(N-3m)$, as each clause reference point is adjacent to three pieces.

%\medskip

By combining these three constructed gadgets, we obtain our election $E = (\mathcal{C},V)$ instance with $4N+4n-11m$ candidates and voters. The size of the committee is $k = N+n-3m$.
We now prove that $E$ has a committee of size $k$ with score $\leq 4$ iff $\phi$ is satisfiable.

\medskip

\paragraph{The ``if'' part.~~~}
Suppose $\phi$ is satisfiable and let $\pi: X \rightarrow \{\mathsf{true},\mathsf{false}\}$ be an assignment which makes $\phi$ true.
We construct a committee $T \subseteq \mathcal{C}$ of size $k$ as follows.
Our committee $T$ contains one candidate in each variable gadget and each piece gadget (this guarantees $|T| = k$ as the total number of variable and piece gadgets is $k$).
Consider a variable $x_i$.
By our construction, the $x_i$-gadget contains four candidates which have the same $x$-coordinates as $\hat{x}_i$.
If $\pi(x_i) = \mathsf{true}$ (resp., $\pi(x_i) = \mathsf{false}$), we include in $T$ the topmost (resp., bottommost) candidate in the $x_i$-gadget.
Now consider a piece $s$ that is not adjacent to any clause reference point.
We first determine a variable as the \textit{associated} variable of $s$ as follows.
If $s$ is vertical, then the associated variable of $s$ is just defined as the variable whose vertical segment contains $s$.
If $s$ is horizontal, then $s$ must belong to the horizontal segment of some clause $z_j$.
In this case, we define the associated variable of $s$ as the variable whose vertical segment intersects the left (resp., right) endpoint of the horizontal segment of $z_j$ if $s$ is to the left (resp., right) of the reference point $\hat{z}_j$.
Let $x_i$ be the associated variable of $s$.
If $\pi(x_i) = \mathsf{true}$, then we include in $T$ the candidate in the $s$-gadget that has distance $1$ (resp., $0.9$) from $s^-$ if $s$ is above (resp., below) the $x$-axis.
Symmetrically, if $\pi(x_i) = \mathsf{false}$, then we include in $T$ the candidate in the $s$-gadget that has distance $1$ (resp., $0.9$) from $s^-$ if $s$ is below (resp., above) the $x$-axis.
This finishes the construction of $T$.
The following lemma completes the ``if'' part of our proof.

\begin{lemma}
\label{lem:equivalence-if-part}
The score of $~T$ in the election $E$ is at most $4$.
\end{lemma}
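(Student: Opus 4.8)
The plan is to show that every voter in $E$ has some committee member of $T$ within ranking distance $4$, by examining each of the three gadget types in turn. For a voter $v$, it suffices to find a candidate $c \in T$ such that at most three other candidates are strictly closer to $v$ than $c$ is. The key geometric fact driving all cases is that the gadget points are clustered tightly (at scale $\le 0.02$) near the reference points and the piece endpoints, while distinct gadgets are separated by distances $\ge 1 - 0.02$ or more; consequently, the only candidates that can interfere with $v$'s rank of a nearby committee member are the (at most three) other candidates in $v$'s own gadget plus possibly one candidate sitting at distance roughly $1$ in an adjacent gadget. So the bound of $4$ comes from "$v$ itself is beaten only by its three gadget-mates, or by a few points in the single neighboring piece."

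First I would handle the \emph{variable gadgets}. A voter $v$ in the $x_i$-gadget sits at distance $0.01$ or $0.02$ from $\hat x_i$ on one of the two incident vertical pieces; by construction $T$ contains the topmost or bottommost of the four $x_i$-gadget candidates. I would check that this committee member is at distance at most $0.04$ from $v$, and that the only candidates possibly closer to $v$ are the other three $x_i$-gadget candidates (everything else—candidates in incident piece gadgets—is at distance $\ge 0.49 - 0.02$), so $\sigma_v(T) \le 4$. Second, the \emph{piece gadgets}: a voter $v$ in an $s$-gadget lies on $s$ at distance $0.49, 0.8, 0.9,$ or $1$ from $s^-$, and $T$ contains the $s$-gadget candidate at distance $1$ (or $0.9$, depending on orientation and $\pi$) from $s^-$. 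I would verify that this candidate is within distance $\le 0.51$ of $v$ and that the only points that can be closer are the other three $s$-gadget candidates together with at most one candidate in the \emph{adjacent} gadget across the shared endpoint—which is exactly where the truth-assignment consistency is used: because the associated-variable rule makes the "far" candidate selections line up coherently along each vertical segment and along each side of each clause segment, at the shared endpoint $s^+$ at most one selected candidate of the neighboring gadget is within distance $\le 0.51$ of $v$, keeping the count at $\le 4$.

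The third and most delicate case is the \emph{clause gadgets}: the voter at $\hat z_i$ has \emph{no} candidate of its own in $T$, so it must be represented by a candidate in one of the three pieces incident to $\hat z_i$—and those incident pieces have no gadgets either. Its nearest committee candidates therefore live in the \emph{next} pieces out along the clause segment (or along a vertical segment), at distance roughly $1$ from $\hat z_i$. Here is where satisfiability is essential: I would argue that since $\pi$ satisfies clause $z_i$, at least one of the three variables $x_j$ in $z_i$ has the truth value that causes the corresponding chain of "far" selections to propagate a committee candidate to the endpoint of a $\hat z_i$-incident piece—i.e., to distance exactly $1$ from $\hat z_i$. Then I would bound the number of candidates strictly closer to $\hat z_i$ than this candidate: these are candidates lying within distance $1$ of $\hat z_i$, which (by the distinct-integer-coordinate layout and the clustering) number at most $3$, giving $\sigma_{\hat z_i}(T) \le 4$.

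The main obstacle I anticipate is precisely this clause-voter count: one must carefully rule out that \emph{too many} far-endpoint candidates from the various incident pieces (and from candidates at distance just under $1$) crowd around $\hat z_i$, and confirm that in a satisfying assignment at least one of them lands at distance exactly $1$ (not $\le 0.9$ on the wrong side). This requires a clean case analysis of how the associated-variable rule orients the selections on the left piece, the right piece, and the vertical piece meeting at $\hat z_i$, using that the reference point is the \emph{middle} connection point on the clause segment and that the $y$-coordinates of clause segments are distinct. I expect the variable- and piece-gadget cases to be routine distance checks once the separation constants $0.01, 0.02, 0.49, 0.8, 0.9, 1$ are in hand; the bookkeeping for the clause gadget, tying the geometry back to the logical structure of $\phi$, is the crux.
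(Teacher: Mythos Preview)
Your variable-gadget and clause-gadget arguments are essentially the paper's, and they are fine. The gap is in the piece-gadget case, and it is not merely bookkeeping.

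Take a piece $s$ above the $x$-axis whose associated variable $x_i$ has $\pi(x_i)=\mathsf{true}$, so the committee member from $s$ is $c_4$, the candidate at distance $1$ from $s^-$. Consider the voter $v_1$ at distance $0.49$ from $s^-$. Its distance to $c_4$ is $0.51$, but there are \emph{four} candidates strictly closer: $c_1,c_2,c_3$ (at distances $0,\,0.31,\,0.41$) and the candidate $c_4'$ of the \emph{preceding} piece sitting exactly at $s^-$ (distance $0.49$). So $c_4$ has rank $5$ in $v_1$'s list, and your bound ``other three $s$-gadget candidates plus at most one from the adjacent gadget, keeping the count at $\le 4$'' would in fact yield $\sigma_{v_1}(T)\le 5$, not $4$. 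The paper's actual argument is that $v_1$ is \emph{not} represented by $c_4$ at all: truth-assignment consistency along the chain forces the preceding piece (same associated variable, same truth value) to also select its distance-$1$ candidate, which is precisely $c_4'$ at $s^-$; and $c_4'$ has rank $4$ for $v_1$. When $s$ is adjacent to $\hat x_i$, the role of $c_4'$ is played by the $x_i$-gadget candidate at distance $0.02$ above $\hat x_i$, which is likewise in $T$ because $\pi(x_i)=\mathsf{true}$.

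In other words, you have the function of the consistency hypothesis inverted. It is not used to \emph{limit} how many neighbouring candidates crowd near $v_1$ (that number is fixed by the geometry, independently of $\pi$); it is used to guarantee that one of those nearby candidates---the one at $s^-$, across the $s^-$ endpoint, not $s^+$---is actually in $T$. Once you reorient the argument this way (and treat the $\pi(x_i)=\mathsf{false}$ case, where $c_3$ is selected and already has rank $\le 3$ for $v_1$, separately), the piece-gadget case goes through.
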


Proof of \cref{lem:equivalence-if-part} is presented in the Appendix~\ref{app:proof-lemma-1}.

\medskip

\paragraph{The ``only if'' part.}
Suppose there exists a size-$k$ committee $T \subseteq \mathcal{C}$ with score at most $4$.
We use that committee to construct a satisfying assignment $\pi: X \rightarrow \{\mathsf{true},\mathsf{false}\}$. We first note the following property of the committee $T$.
\begin{lemma} \label{lem-onecand}
$T$ contains exactly one candidate in each variable gadget and exactly one candidate in each piece gadget. 
\end{lemma}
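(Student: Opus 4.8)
The plan is to derive the lemma from a tight counting argument together with two local geometric observations. First, recall that $|T| = k = N + n - 3m$, which is exactly (the number of variable gadgets) $+$ (the number of piece gadgets) $= n + (N - 3m)$; moreover the variable and piece gadgets are pairwise disjoint as candidate sets and are disjoint from the $m$ clause gadgets, so $|T| \ge \sum_g |T\cap g|$ with $g$ ranging over all variable and piece gadgets. Consequently, if we show that $|T \cap g| \ge 1$ for every variable gadget and every piece gadget $g$, then equality holds throughout, giving $|T \cap g| = 1$ for each such $g$ (and, incidentally, that $T$ meets no clause gadget). So it remains to establish the two ``every gadget is hit'' claims, and for each I would point to a single voter all of whose four closest candidates lie inside the gadget in question; since $T$ has score at most $4$, it must then contain one of those four.

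For a variable $x_i$, I would use the voter $v$ placed at distance $0.01$ from $\hat x_i$. The four candidates of the $x_i$-gadget lie within distance $0.03$ of $v$, while every other candidate is at distance at least $0.48$ from it: the nearest competitors are the distance-$0.49$ candidates on the two vertical piece gadgets incident to $\hat x_i$ (these incident pieces are genuine piece gadgets, since clause segments sit at even $y$-coordinates whereas $\hat x_i$ lies on the $x$-axis, so neither incident piece abuts a clause reference point), and every other candidate is at distance greater than $1$ (distinct variable reference points differ by at least $2$ in the $x$-coordinate, and clause reference points differ from the $x$-axis by at least $2$ in the $y$-coordinate). Hence the rank-$\le 4$ candidates of $v$ are exactly the four members of the $x_i$-gadget, and so $T$ contains one of them.

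For a piece $s$ that carries a piece gadget, I would use the voter $v$ placed at distance $0.8$ from $s^-$. The four candidates of the $s$-gadget lie at distances $0,\,0.1,\,0.2,\,0.31$ from $v$ (the last being the one at distance $0.49$ from $s^-$), and I claim every other candidate is at distance strictly greater than $0.5$ from $v$. Any such candidate is of one of three kinds: (a) one of the two $x_i$-gadget candidates that lie on $s$, which occurs only when $s^- = \hat x_i$ and puts them at distance $\ge 0.78$ from $v$; (b) a candidate on a piece meeting $s$ at the endpoint $s^+$ (which is at distance $0.2$ from $v$)---because $s$ is a piece gadget, $s^+$ is neither a variable nor a clause reference point, so such a candidate lies at distance $\ge 0.49$ from $s^+$, being either a distance-$\ge 0.49$ candidate of an adjacent piece gadget or a clause-gadget candidate at distance $\ge 1$, hence at distance $\ge \sqrt{0.2^2 + 0.49^2} > 0.5$ from $v$; or (c) a candidate near $s^-$ (which is at distance $0.8$ from $v$) or on a piece disjoint from $s$, each of which lies at distance $\ge 0.78$ from $v$. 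Thus the rank-$\le 4$ candidates of $v$ are exactly the four members of the $s$-gadget, and so $T$ contains one of them.

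The step demanding the most care is the distance bookkeeping in parts (b) and (c): the local picture around a piece varies---$s^-$ may be a variable reference point, a shared endpoint with another piece gadget, or a clause-segment endpoint with a variable's vertical segment hanging from it, while $s^+$ may be shared with another piece gadget or with a non-gadget piece abutting a clause reference point---and one must verify, in each of these configurations, that no foreign candidate slips within distance $0.5$ of the distance-$0.8$ voter. This is routine: it uses only that all piece and segment endpoints have integer coordinates, that clause reference points have pairwise distinct even coordinates, that pieces abutting a clause reference point carry no gadget, and that every piece-gadget candidate lies at distance at least $0.49$ from the relevant endpoint of its piece. But it is the bookkeeping one must carry out honestly.
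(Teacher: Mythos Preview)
Your proposal is correct and follows essentially the same approach as the paper's proof: both argue that each variable gadget and each piece gadget must contain at least one member of $T$ by exhibiting a voter whose four nearest candidates all lie in that gadget, then close with the tight count $|T|=k=n+(N-3m)$. The paper uses the voters at distances $0.8$ and $0.9$ from $s^-$ for the piece case and quotes the distance bounds more tersely, whereas you work only with the distance-$0.8$ voter and spell out the endpoint case analysis explicitly; your extra care in verifying that every piece adjacent to $s^+$ has its ``$-$'' endpoint at $s^+$ (so its nearest candidate sits at distance $\geq 0.49$ from $s^+$) is exactly the bookkeeping the paper leaves implicit.
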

\begin{proof}
We first show that $T$ contains at least one candidate in each variable gadget.
For each variable $x_i$, recall that we place four voters (say $v_1, v_2, v_3, v_4$) and four candidates (say $c_1, c_2, c_3, c_4$) near the reference point $\hat{x}_i$.
Among the four candidates (resp., voters), $c_1, c_2$ (resp., $v_1, v_2$) are at distances $0.01$ and $0.02$ above $\hat{x}_i$, and $c_3, c_4$ (resp. $v_3, v_4$) are at distances $0.01$ and $0.02$ below $\hat{x}_i$, respectively.
We recall from the proof of Lemma 1 (paragraph 2) that the closest four candidates to each of $v_1, v_2, v_3$ and $v_4$ belong to the set $\{c_1, c_2, c_3, c_4\}$.
Since the score of committee $T$ is at most $4$, then, in particular, $\sigma_{v_i}(T) \leq 4$ for each $i \in [4]$; hence, $T$ contains at least one of $c_1, c_2, c_3, c_4$.
This completes the argument for the variable gadgets.
Next, consider a piece gadget $s$. 
Let $v_1, v_2, v_3$, and $v_4$ (resp., $c_1, c_2, c_3$, and $c_4$) be the voters (resp., candidates) placed at distances $0.49, 0.8, 0.9$, $1$ from $s^-$, respectively.
For the voters $v_2$ and $v_3$ a candidate from an adjacent variable/piece/clause gadget is at least at a distance $0.78/0.57/1.1$, respectively, while the candidates $c_1, c_2, c_3, c_4$ lie within distance $0.41$. 
Therefore, the closest four candidates for $v_2$ and  $v_3$ belong to the set $\{c_1, c_2, c_3, c_4\}$.
Hence, $T$ includes at least one of $c_1, c_2, c_3, c_4$ for each piece gadget.

At this stage, we recall that the number of variable gadgets is $n$, the number of piece gadgets is $N - 3m$, and the committee size is $k = N+n-3m$.
Hence, using a counting argument, we conclude that $T$ contains exactly one candidate from each variable gadget and each piece gadget.
\end{proof}

We note that the total number of variable and piece gadgets is $k$.
Since $|T| = k$, using Lemma~\ref{lem-onecand}, we conclude that $T$ has no budget to contain any candidate in the clause gadgets.
%Therefore, we obtain the following corollary.
\begin{corollary} \label{cor-noclcand}
$T$ contains no candidate in the clause gadgets.
\end{corollary}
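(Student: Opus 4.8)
The plan is a pure counting argument built directly on Lemma~\ref{lem-onecand}, so this will be short. First I would record that the candidate sets of the three families of gadgets are pairwise disjoint: the variable-gadget candidates sit at the four points within distance $0.02$ of a reference point $\hat{x}_i$, the piece-gadget candidates sit at the four interior points of pieces that are \emph{not} adjacent to any clause reference point, and the clause-gadget candidates sit exactly at the clause reference points $\hat{z}_i$. By the construction all of these locations are distinct (in particular a clause reference point is not an endpoint of any piece gadget, since we did not build gadgets on pieces adjacent to clause reference points), so no candidate lies in two different gadgets.

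Next I would simply add up what Lemma~\ref{lem-onecand} already forces. That lemma says $T$ contains exactly one candidate from each of the $n$ variable gadgets and exactly one from each of the $N-3m$ piece gadgets; by the disjointness just noted, these are $n+(N-3m)$ distinct candidates of $T$. Recalling $k = N+n-3m$, this accounts for $n+(N-3m)=k$ members of $T$, all drawn from variable and piece gadgets. Since $|T|=k$, there is no remaining budget, and therefore $T$ contains no candidate located in any clause gadget, which is exactly the assertion of the corollary.

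There is essentially no obstacle here: the whole content is the bookkeeping identity $n+(N-3m)=k$ together with the observation that the gadget candidate sets do not overlap, both of which are immediate from the construction. (This corollary is stated separately only because it is the hook used afterwards — each clause voter, having no committee member at its own reference point, must be ``served'' by a candidate reached through one of its three incident vertical pieces, which is what will let us read off a satisfying assignment.)
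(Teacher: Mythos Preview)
Your proposal is correct and matches the paper's own argument essentially verbatim: both use the counting identity $n+(N-3m)=k$ together with Lemma~\ref{lem-onecand} to conclude that the $k$ mandatory picks from variable and piece gadgets exhaust $T$. Your explicit note on disjointness of the gadget candidate sets is a nice clarification of something the paper leaves implicit, but otherwise the proofs are the same.
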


Recall that each variable gadget contains four candidates, two of which are above the $x$-axis while the other two are below the $x$-axis.
Consider a variable $x_i$.
By Lemma~\ref{lem-onecand}, $T$ contains exactly one candidate in the $x_i$-gadget.
If that candidate is above (resp., below) the $x$-axis, we set $\pi(x_i) = \mathsf{true}$ (resp., $\pi(x_i) = \mathsf{false}$).
We show that $\pi$ is a satisfying assignment of $\phi$.
It suffices to show that every positive (resp., negative) clause of $\phi$ contains at least one variable which is mapped to $\mathsf{true}$ (resp., $\mathsf{false}$) by $\pi$.
We only consider positive clauses, as the proof for negative clauses is similar.
%Let $z_i$ be a positive clause.
We need the following property of $T$.
\begin{lemma} \label{lem-true}
Let $s$ be a piece above the $x$-axis that is not adjacent to any clause reference point, and suppose $x_i$ is the associated variable of $s$.
If $T$ contains the candidate in the $s$-gadget with distance $1$ from $s^-$, then $\pi(x_i) = \mathsf{true}$.
\end{lemma}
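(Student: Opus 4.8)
The plan is to prove the lemma by a ``backwards propagation'' argument: starting from $s$, walk along the orthogonal embedding toward the reference point $\hat x_i$ one piece at a time, and show inductively that at each piece the committee $T$ is forced to contain a specific candidate, ending at the topmost candidate of the $x_i$-gadget.

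First I would pin down the relevant path. If $s$ is a vertical piece it lies on $x_i$'s vertical segment, and the path simply walks down that segment to $\hat x_i$. If $s$ is a horizontal piece on a (positive) clause segment, then $x_i$ is the left-end or right-end variable of that clause; the path walks along the clause segment toward the endpoint lying on $x_i$'s vertical segment, makes one ``turn'', and then walks down $x_i$'s vertical segment to $\hat x_i$. This produces pieces $s = s^{(0)}, s^{(1)}, \dots, s^{(\ell)}$, all with associated variable $x_i$, consecutive pieces sharing an endpoint, and $s^{(\ell)}$ incident to $\hat x_i$ from above. The key structural point to establish next is that every $s^{(t)}$ carries a gadget, i.e.\ is not adjacent to a clause reference point. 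This uses the structure of the orthogonal embedding: a clause reference point $\hat z$ lies on a variable's vertical segment only when that variable is the middle variable of its highest positive (or lowest negative) clause, in which case $\hat z$ is an \emph{endpoint} of that vertical segment; hence no clause reference point lies on $x_i$'s vertical segment strictly between $\hat x_i$ and its top endpoint, and the clause reference point of any clause segment we traverse lies on the far side of $s^{(0)}$ and is never visited. Combined with the hypothesis that $s = s^{(0)}$ is not adjacent to a clause reference point, this gives the claim.

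Next I would set up the induction. For each $t$, let $c^{(t)}$ be the candidate of the $s^{(t)}$-gadget at distance $1$ from $(s^{(t)})^-$; then $c^{(t+1)}$ lies exactly at the shared endpoint $(s^{(t)})^- = (s^{(t+1)})^+$. The inductive claim is: $c^{(t)} \in T \Rightarrow c^{(t+1)} \in T$. To prove one step, use Lemma~\ref{lem-onecand} to conclude that $c^{(t)}$ is the only member of the $s^{(t)}$-gadget in $T$, and look at the voter $w$ of the $s^{(t)}$-gadget placed at distance $0.49$ from $(s^{(t)})^-$. A short distance computation shows that $w$'s four nearest candidates are $w$ itself (distance $0$), the two candidates of the $s^{(t)}$-gadget at distances $0.8$ and $0.9$ from $(s^{(t)})^-$ (distances $0.31$ and $0.41$ from $w$), and $c^{(t+1)}$ (distance $0.49$ from $w$); in particular $c^{(t)}$, at distance $0.51$, has rank at least $5$ for $w$, and one checks that candidates of off-path gadgets are strictly farther still. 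Since $\sigma(T)\le 4$ and the only member of the $s^{(t)}$-gadget in $T$ is $c^{(t)}$, which has rank $\ge 5$ for $w$, the voter $w$ can be covered within rank $4$ only by $c^{(t+1)}$; thus $c^{(t+1)}\in T$. The one step that needs care is the ``turn'', where $s^{(t)}$ is horizontal and $s^{(t+1)}$ is vertical: there the candidates of the $s^{(t+1)}$-gadget near $(s^{(t)})^-$ other than $c^{(t+1)}$ lie at distance at least $\sqrt{0.49^2+0.1^2}>0.49$ from $w$, so the ordering of the four nearest candidates of $w$ is unchanged.

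Finally, for the base of the recursion, the induction gives $c^{(\ell)}\in T$, where $c^{(\ell)}$ sits one unit above $\hat x_i$ on $x_i$'s vertical segment. Examining the voter $w$ of the $s^{(\ell)}$-gadget at distance $0.49$ from $\hat x_i$, its four nearest candidates turn out to be $w$ itself, the two candidates of the $s^{(\ell)}$-gadget at distances $0.8$ and $0.9$ from $\hat x_i$, and the \emph{topmost} candidate of the $x_i$-gadget (the one $0.02$ above $\hat x_i$, at distance $0.47$ from $w$), while $c^{(\ell)}$ is at distance $0.51$ and not among them. As before, $w$ can be covered within rank $4$ only by the topmost $x_i$-gadget candidate, so that candidate lies in $T$; by Lemma~\ref{lem-onecand} it is the unique $x_i$-gadget candidate in $T$, hence it is above the $x$-axis, and therefore $\pi(x_i)=\mathsf{true}$ by the definition of $\pi$. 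The conceptual content of the argument is the single observation that the path to $\hat x_i$ never crosses a clause reference point; the rest is routine bookkeeping of the $O(1)$ pairwise distances around each voter and checking that off-path gadgets never intrude into a nearest-four set.
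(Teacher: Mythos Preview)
Your proposal is correct and follows essentially the same approach as the paper: both argue by backward propagation along the path toward $\hat{x}_i$, using at each step that the voter at distance $0.49$ from $(s^{(t)})^-$ has its four nearest candidates equal to the three in-gadget candidates at $0.49,0.8,0.9$ together with the distance-$1$ candidate of the preceding piece, so Lemma~\ref{lem-onecand} forces that candidate into $T$. Your write-up is actually more careful than the paper's in two respects---you make explicit why no piece on the path is adjacent to a clause reference point (via the no-crossing property of the orthogonal embedding), and you verify the distance ordering at the horizontal-to-vertical turn---whereas the paper leaves both points implicit.
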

\begin{proof}
Consider a piece $s$ above the $x$-axis that is not adjacent to any clause reference point, and let $x_i$ be the variable associated with $s$.
First, consider the case when $s$ is not adjacent to the variable reference point $\hat{x}_i$.
Let $v_1, v_2, v_3$, and $v_4$ (resp., $c_1, c_2, c_3$, and $c_4$) be the voters (resp., candidates) placed at distances $0.49, 0.8,0.9$, and $1$ from $s^-$, respectively.
Moreover, let $s'$ be the piece below (resp., to the left of) $s$ when $s$ is a vertical (resp., horizontal) piece.
We assume that $c_4 \in T$ (note that $\text{dist}(s^-, c_4) = 1$).
We recall from the proof of Lemma 1 (paragraph 3) that the closest four candidates for $v_1$ are $c_1, c_2, c_3, c_4'$ where $c_4'$ is a candidate from the piece $s'$ placed at a distance 1 from $s'^-$.
Using Lemma~2, we know $T$ only includes $c_4$ from $s$.
Hence, to satisfy $\sigma_{v_1}(T) \leq 4$, $T$ must include the candidate $c_4'$.
Observe that we can repeat the above argument for all pieces below (resp., to the left of) $s$, which implies that for all pieces $s_i$ below (resp., to the left of) $s$, $T$ includes the candidate in $s_i$ placed at a distance $1$ from $s_i^-$.

Let $x_i$ be the variable associated with $s$ and $\hat{s}$ be the piece adjacent to the variable reference point $\hat{x}_i$.
Moreover, let $\hat{v}_1, \hat{v}_2, \hat{v}_3$, and $\hat{v}_4$ (resp., $\hat{c}_1, \hat{c}_2, \hat{c}_3$, and $\hat{c}_4$) be the voters (resp., candidates) placed at distances $0.49, 0.8,0.9$, and $1$ from $\hat{s}^-$, respectively.
The four closest candidates to $\hat{v}_1$ belong to the set $\{\hat{c}_1, \hat{c}_2, \hat{c}_3,c_5\}$ where $c_5$ is the candidate at a distance $0.02$ above $\hat{x}_i$ added corresponding to the $x_i$-gadget. 
Using the argument above, we know $T$ only includes $\hat{c}_4$ from the piece $\hat{s}$.
Hence, to satisfy $\sigma_{\hat{v}_1}(T) \leq 4$, $T$ must include the candidate $c_5$.
Recall that for an arbitrary variable $x_j$ for $j \in [n]$, if $T$ includes a candidate above the reference point $\hat{x}_j$, we set $x_j = \mathsf{true}$.
Since, $c_5 \in T$ and $c_5$ lies above $\hat{x}_i$, we set $x_i = \mathsf{true}$.
This completes the proof of Lemma~\ref{lem-true}.
\end{proof}

%%%%%%% Shorter version %%%%%%%%
% The rest of the argument to complete the ``only if'' part of our proof includes showing that for any positive clause $z_i$, $\pi$ maps at least one variable of $z_i$ to $\mathsf{true}$. Due to limited space, we give the details in supplementary material.

We will now use Lemma~\ref{lem-onecand}, Lemma~\ref{lem-true} and Corollary~\ref{cor-noclcand} to show that the constructed assignment $\pi$ satisfies $\phi$. 
We will show this only for positive clauses as the argument for negative clauses is similar.

Let $z_i$ be a positive clause. 
We want to show that at least one variable of $z_i$ is mapped to $\mathsf{true}$ by $\pi$.
Consider the three pieces $s_1,s_2,s_3$ adjacent to the reference point $\hat{z}_i$.
Suppose $s_1$ is to the left of $\hat{z}_i$, $s_2$ is to the right of $\hat{z}_i$, and $s_3$ is below $\hat{z}_i$.
By our construction, we have $\hat{z}_i = s_1^+ = s_2^+ = s_3^+$.
Since $\hat{z}_i$ is a connection point and all connection points have even coordinates, $s_1^-,s_2^-,s_3^-$ are not connection points.
Therefore, there exist pieces $s_4,s_5,s_6$ such that the right endpoint of $s_4$ is $s_1^-$, the left endpoint of $s_5$ is $s_2^-$, and the top endpoint of $s_6$ is $s_3^-$.
We have $s_1^- = s_4^+$, $s_2^- = s_5^+$, and $s_3^- = s_6^+$.
In the $s_4$-gadget, there is a candidate $c_4$ with distance $1$ from $s_4^-$ (and hence located at $s_1^-$).
Similarly, there is a candidate $c_5$ in the $s_5$-gadget located at $s_2^-$ and a candidate $c_6$ in the $s_6$-gadget located at $s_3^-$.
The candidates $c_4,c_5,c_6$, together with the candidate in the $z_i$-gadget (which is located at $\hat{z}_i$), are the four candidates closest to the voter at $\hat{z}_i$, because all pieces except $s_1,\dots,s_6$ have distances at least $2$ from $\hat{z}_i$ by the fact that the $x$-coordinates (resp., $y$-coordinates) of the vertical (resp., horizontal) pieces are all even numbers.
Since the score of $T$ is at most $4$, $T$ must contain at least one of these four candidates.
However, $T$ does not contain the candidate in the $z_i$-gadget by Corollary~1.
Thus, $T \cap \{c_4,c_5,c_6\} \neq \emptyset$.
By Lemma~3, this implies that at least one of the associated variables of $s_4,s_5,s_6$ is true.
Note that these associated variables are just the three variables in the clause $z_i$.
Therefore, $\pi$ makes $z_i$ true.
This completes the ``only if'' part of our proof.
As a result, we see that $\phi$ is satisfiable iff there exists a committee in $E$ of size $k$ whose score is at most $4$.

Finally, the reduction can clearly be done in polynomial time, and so we have established the
following result.
\begin{theorem}
\label{thm:MCC-NPH}
Euclidean minimax committee is \NPH{} in all dimensions $d \geq 2$. This claim holds even if the voter and candidate sets are identical.
\end{theorem}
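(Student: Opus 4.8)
The plan is to turn the reduction from \textsc{Planar Monotone 3-SAT} (which is \NPC{}~\cite{deBerg2010SAT}) that has been set up above into a complete hardness proof, and then lift it from the plane to every dimension $d \geq 2$. For $d=2$, the construction is already in hand: from a PM-3SAT instance $\phi$ with its rectangular embedding we pass to the orthogonal embedding with even-integer connection-point coordinates, place the variable gadgets (four points at offsets $0.01,0.02$ above and below each $\hat{x}_i$), the clause gadgets (one point at each $\hat{z}_i$), and the piece gadgets (four points at offsets $0.49,0.8,0.9,1$ from $s^-$ on each non-clause-adjacent piece), and set the committee budget to $k = N+n-3m$ where $N=O(nm)$ is the number of pieces. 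What remains is to certify the equivalence ``$\phi$ is satisfiable $\iff$ $E$ admits a size-$k$ committee of score at most $4$'', to note polynomiality, and to handle $d>2$.

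For the equivalence, the ``if'' direction is exactly \cref{lem:equivalence-if-part}: given a satisfying assignment $\pi$, one selects the top/bottom candidate in each variable gadget according to $\pi(x_i)$ and propagates that choice along every piece through its associated variable, and the lemma guarantees each voter is served by one of its four nearest candidates, so the score is at most $4$. The ``only if'' direction is assembled from \cref{lem-onecand} (a counting argument that forces exactly one chosen candidate in each variable gadget and each piece gadget, since each of the relevant voters has all four of its nearest candidates inside its own gadget), \cref{cor-noclcand} (hence no budget is left for the clause gadgets), and \cref{lem-true} (a propagation argument showing that whenever the ``distance-$1$'' candidate of an above-axis piece is in $T$, the associated variable must be set to \textsf{true}). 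One then defines $\pi(x_i)$ by which side of $\hat{x}_i$ the chosen variable-gadget candidate lies on, and checks that for any clause $z_i$ the voter at $\hat{z}_i$ has as its four nearest candidates only the clause candidate (excluded by \cref{cor-noclcand}) and the ``distance-$1$'' candidates of the three incident outward pieces $s_4,s_5,s_6$; since the score is at most $4$, $T$ must contain one of the latter, and \cref{lem-true} then forces one of the three variables of $z_i$ to be \textsf{true} (the negative-clause case being symmetric). This yields the claimed biconditional.

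The reduction is computable in polynomial time: the orthogonal embedding and all coordinates are produced by polynomially many shift/scale operations, the number of voters and candidates is $4N+4n-11m = O(nm)$, and all coordinates have a common fixed denominator. Since every point in the construction is simultaneously a candidate and a voter, i.e.\ $\mathcal{C}=V$, the hardness already holds for identical voter and candidate sets; duplicating voters leaves all scores unchanged, so one may also enlarge $V$ relative to $\mathcal{C}$ if desired. Finally, for $d>2$ we embed the planar instance into $\mathbb{R}^d$ by appending $d-2$ zero coordinates to every point; all pairwise Euclidean distances, and therefore all voter preference orders and all committee scores, are preserved, so the same reduction certifies \NPH{}ness in every dimension $d\ge 2$.

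The main obstacle is the ``only if'' direction, specifically verifying that the carefully tuned offsets make the four nearest candidates of each ``forcing'' voter lie entirely within its own gadget (the $0.01/0.02$ spacing near variable reference points, and the $0.49/0.8/0.9/1$ spacing along pieces, must beat the distances to any neighboring gadget), and that these purely local constraints chain correctly through the shared piece endpoints all the way from each clause reference point back to the variable gadgets. This is precisely the content packaged into \cref{lem-onecand} and \cref{lem-true}; once those are in place, \cref{thm:MCC-NPH} follows by combining them with the polynomial-time bound.
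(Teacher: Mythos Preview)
Your proposal is correct and follows essentially the same approach as the paper: the entire argument preceding \cref{thm:MCC-NPH} in Section~\ref{sec:hardness-results} \emph{is} the paper's proof, and you have faithfully summarized its structure (the ``if'' direction via \cref{lem:equivalence-if-part}, the ``only if'' direction via \cref{lem-onecand}, \cref{cor-noclcand}, and \cref{lem-true}, the polynomial-time bound, and the fact that $\mathcal{C}=V$). Your explicit remark about lifting to $d>2$ by zero-padding coordinates is a detail the paper leaves implicit, but it is the obvious and correct way to extend the planar reduction.
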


%In fact, our construction also rules out the possibility of a polynomial-time approximation scheme (\textup{PTAS}) because if there were a PTAS for the Euclidean minimax committee problem in $\mathbb{R}^2$, then we could decide in polynomial time whether the minimum score is $\leq 4$ or $\geq 5$, which would solve the PM-3SAT instance in polynomial time. Thus, we have the following result.

In fact, our construction also rules out the possibility of a \textup{PTAS} as it is hard to decide in polynomial time whether the minimum score is $\leq 4$ or $\geq 5$.
By slightly modifying the proof, we can also show that even computing an approximation for Euclidean minimax committee within any sublinear factor of $|\mathcal{C}|$ is hard.

\begin{theorem} \label{thm-hardtoappx}
For any constant $\epsilon > 0$, it is \NPH{} to achieve a $|\mathcal{C}|^{1-\epsilon}$-approximation for Euclidean minimax committee in $\mathbb{R}^d$ for any $d \geq 2$.
\end{theorem}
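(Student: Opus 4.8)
The plan is to \emph{amplify the gap} in the reduction behind Theorem~\ref{thm:MCC-NPH}. That reduction turns a PM-3SAT instance $\phi$ with $n$ variables and $m$ clauses into an election $E$ with $\Theta(nm)$ candidates and a budget $k$ such that $\phi$ is satisfiable iff $E$ has a size-$k$ committee of score $\le 4$, and otherwise every size-$k$ committee has score $\ge 5$. To get $|\mathcal{C}|^{1-\epsilon}$-inapproximability it suffices to replace the constant ``$5$'' in the ``no'' case by a quantity $M$ that is polynomial in the instance size while keeping the ``yes'' bound $O(1)$: then choosing, say, $M=(nm)^{\lceil 2/\epsilon\rceil}$ makes $M/4 > |\mathcal{C}|^{1-\epsilon}$ for all large inputs, and the reduction is still polynomial. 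As in Theorem~\ref{thm:MCC-NPH} it is enough to treat $d=2$ (isometrically embed $\mathbb{R}^2$ in $\mathbb{R}^d$).

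The amplification I would use attaches to each clause reference point $\hat{z}_i$ a \emph{packet} $D_i$ of $M$ extra \emph{candidates} (not voters; unlike Theorem~\ref{thm:MCC-NPH} we will have $|V|<|\mathcal{C}|$, which the statement permits), placed in a tiny ball at distance roughly $\tfrac32$ from $\hat{z}_i$, in a direction orthogonal to the clause segment and into empty space that we reserve when building the orthogonal embedding (spacing the clause segments at pairwise $y$-distance at least $4$ makes room). These placements are chosen so that every candidate of $D_i$ is farther from every voter other than $\hat{z}_i$ than that voter's current $4$th-closest candidate. The packets add $mM$ candidates, so $|\mathcal{C}|=\Theta(nm+mM)$.

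For the ``yes'' direction I would take verbatim the committee $T$ built in the ``if'' part of the original proof: by the placement, each $D_i$-candidate lies strictly beyond the $4$th-closest candidate of every voter, so no voter's score rises and it stays $\le 4$. For the ``no'' direction I would rerun the ``only if'' analysis. Lemma~\ref{lem-onecand} and Corollary~\ref{cor-noclcand} survive unchanged (the new candidates are far from every gadget voter, so the ``$4$ closest candidates lie in the gadget'' facts and the counting are untouched), so a size-$k$ committee $T$ contains exactly one candidate per variable/piece gadget and nothing else --- in particular none of the $D_i$'s. If $\phi$ is unsatisfiable, some clause $z_i$ is falsified by the induced assignment $\pi$; then the $\hat{z}_i$-gadget candidate is not in $T$ (Corollary~\ref{cor-noclcand}), none of the three distance-$1$ ``signal'' candidates around $\hat{z}_i$ is in $T$ (Lemma~\ref{lem-true}), and none of the $M$ candidates of $D_i$ is in $T$. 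Provided these $M+4$ candidates are exactly the $M+4$ candidates closest to the voter $\hat{z}_i$, that voter has score at least $M+5$, and the reduction is complete.

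The step I expect to be the real obstacle is making that proviso hold. In the current construction the piece gadgets $s_4,s_5,s_6$ adjacent to the clause-incident pieces contribute, besides their signal candidates, three ``non-signal'' candidates each at distance only $\approx 1.1$–$1.5$ from $\hat{z}_i$, and exactly one of them may be forced into $T$ by Lemma~\ref{lem-onecand}; such a candidate would cap $\sigma_{\hat{z}_i}(T)$ at $O(1)$ and defeat the amplification. So the construction must be refined so that near each clause reference point only the signal candidates lie within distance $2$: I would relocate the non-signal candidates of $s_4,s_5,s_6$ (together with their associated forcing voters) toward the far endpoints $s_4^-,s_5^-,s_6^-$, pushing them to distance $\approx 2$ from $\hat{z}_i$, and correspondingly re-tune the neighbouring piece gadgets so that the ``$4$ closest candidates'' facts underlying Lemmas~\ref{lem-onecand} and~\ref{lem-true} (and the coverage of every voter in the ``yes'' case) are preserved. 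After this redesign the packet $D_i$ slots exactly into ranks $5,\dots,M+4$ of $\hat{z}_i$ and the gap becomes $[4,\,M+5]$. Carrying out this local geometric redesign consistently, and rechecking the affected closeness inequalities, is where the bulk of the work lies; everything else is the bookkeeping already present in the proof of Theorem~\ref{thm:MCC-NPH}.
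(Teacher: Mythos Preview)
Your high-level strategy---amplify the $4$-versus-$5$ gap of Theorem~\ref{thm:MCC-NPH} by padding the instance with a large polynomial number of dummy candidates---is exactly right and is what the paper does. The implementation, however, is genuinely different, and the paper's choice neatly sidesteps the obstacle you correctly flag.

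You attach the $M$ dummies in a packet $D_i$ near each \emph{clause} reference point, hoping that an unsatisfied clause voter directly witnesses score $\ge M+5$. As you observe, the three neighbouring piece gadgets each place a committee member at distance $\approx 1.1$--$1.5$ from $\hat z_i$, which caps that voter's score at $O(1)$ and kills the amplification; your proposed fix (push the non-signal candidates of $s_4,s_5,s_6$ out to distance $\approx 2$ and re-tune the adjacent gadgets) is plausible but left as the ``bulk of the work''.

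The paper instead attaches the dummies to \emph{every piece and variable gadget}, not to the clauses: for each piece $s$ it places $(nm)^w$ extra candidates at two points $s_3',s_4'$ offset by $0.19$ perpendicular to $s$ next to $s_3,s_4$, and for each variable it places $(nm)^w$ candidates at points $q^\pm$ next to the outer variable candidates. Nothing is added near the clause points. The role of these dummies is not to make one voter's score blow up directly, but to \emph{rigidify the committee}: if a size-$k$ committee $T$ has score $<(nm)^w$, then (i) the counting argument still forces exactly one committee member per piece/variable gadget and none elsewhere, and (ii) the voters at $s_3$ and $s_4$ now have their top $(nm)^w$ candidates entirely inside their own gadget, which forces that single member to sit at $s_3$ or $s_4$ (and similarly at one of $c_1,\dots,c_4$ for variables). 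With $T$ pinned down this sharply, the propagation of Lemma~\ref{lem-true} and the clause analysis go through verbatim and produce a satisfying assignment. Contrapositively, if $\phi$ is unsatisfiable then every size-$k$ committee has score $\ge (nm)^w$.

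So the paper trades your ``direct witness at the clause'' for ``globally force the committee shape, then reuse the Theorem~\ref{thm:MCC-NPH} decoding''. Because the dummies sit beside already-analysed gadget points and are chosen so that the rank-$4$ neighbourhoods used in the yes-case are unchanged, no redesign of the clause-adjacent pieces is needed; the geometric surgery you identified as the hard step simply does not arise. Your route can probably be completed with the redesign you sketch, but the paper's placement is cleaner and avoids that work.
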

\begin{proof}
Our proof is a slight modification of the proof of NP-hardness from Theorem~1. 
We will show that even computing an approximation for Euclidean minimax committee within a sublinear factor of $|\mathcal{C}|$ is hard.

Given a PM-3SAT instance with formula $\phi$, we construct the orthogonal embedding and create the clause gadgets as before.
Next, we slightly change the piece gadgets as follows.
For each piece $s$ that is not adjacent to a clause reference point, we choose $4$ points with distances $0.47, 0.8, 0.9, 1$ from $s^-$ (call them $s_1, s_2, s_3, s_4$), and introduce a voter and a candidate at each of these points.
Also, we create an additional set of candidates as follows.
Note that $s_4$ is an endpoint of $s$.
Thus, there can be two or three pieces adjacent to $s_4$, depending on whether $s_4$ is a connection point or not.
It follows that when $s$ is vertical (resp., horizontal), there is no piece adjacent to $s_4$ in one of the left or right (resp., top or bottom) directions.
Without loss of generality, assume $s$ is vertical and there is no piece adjacent to $s_4$ on its left.
Let $s_4'$ (resp., $s_3'$) be the point to the left of $s_4$ (resp., $s_3)$ with distance $0.19$ from $s_4$ (resp., $s_3)$.
We then place $(nm)^w$ candidates at $s_4'$ (resp., $s_3'$) for a sufficiently large constant $w$.
The candidates/voters at $s_1, s_2, s_3, s_4$ and the additional candidates at $s_3',s_4'$ form the $s$-gadget.
Finally, recall that in each $x_i$-gadget, we have candidates/voters at the four points near $\hat{x}_i$ two of which are at distances $0.01$ and $0.02$ above $\hat{x}_i$ and the other two of which are at distances $0.01$ and $0.02$ below $\hat{x}_i$.
Let $p^+$ (resp., $p^-$) be the point at distance $0.02$ above (resp., below) $\hat{x}_i$.
Let $q^+$ (resp., $q^-$) be the point to the left of $p^+$ (resp., $p^-$) with distance $0.05$ from $p^+$ (resp., $p^-$).
We place $(nm)^w$ additional candidates at $q^+$ (resp., $q^-$).
This completes the construction.
The desired committee size is again $k = N+n-3m$.

We first show the following structural lemma for the constructed instance.
\begin{lemma}
\label{lem-hard-appx-score}
If $\phi$ is satisfiable, then there exists a size $k$ committee with score at most 4;
otherwise, every size $k$ committee has score at least $(nm)^w$.
\end{lemma}

Proof of Lemma~\ref{lem-hard-appx-score} is presented in Appendix~\ref{app:proof-lemma-4}.
\smallskip
% \begin{lemma}
% \label{lem-hard-appx-score}
% If $\phi$ is satisfiable, then there exists a size $k$ committee with score at most 4;
% otherwise, every size $k$ committee has score at least $(nm)^w$.
% \end{lemma}

Our construction satisfies $|\mathcal{C}| = (4N+4n-11m) + (nm)^w N$, and since $N = O(nm)$ and we can choose any sufficiently large value for $w$, we can guarantee that $(nm)^w \geq |\mathcal{C}|^{1-\epsilon}$ for any small constant $\epsilon > 0$.
This completes the proof of Theorem~2.
\end{proof}

In the rest of this paper, we complement the hardness results of the previous sections with nearly-optimal approximation algorithms.

\section[Appx-using-eps-nets]{Approximation using $\epsilon$-nets}

Our first algorithm computes in polynomial-time a size-$k$ committee of 
minimax score $O(m/k)$ for $d=2$ and $O((m/k) \log k)$ for $d \geq 3$.
Our algorithm uses the notion of $\epsilon$-\textit{nets}, which are commonly used in computational geometry~\cite{toth2017handbook} for solving set cover and hitting set problems.
Let us first briefly describe this notion.

Let $X$ be a finite set of points in $\mathbb{R}^d$ and let $\mathcal{R}$ be a set of ranges (subsets of $X$) in $\mathbb{R}^d$.
A subset $A \subseteq X$ is called an $\epsilon$-\textit{net} of $(X,\mathcal{R})$ if $A$ intersects all those ranges in $\mathcal{R}$ that are $\epsilon$-heavy, i.e., they contain at least an $\epsilon$-fraction of the points in $X$. 
In other words, $A$ is an $\epsilon$-net for $(X,\mathcal{R})$ if $A \cap R \neq \emptyset$ for any $R \in \mathcal{R}$ with $|R \cap X| \geq \epsilon |X|$.
There exists an $\epsilon$-net of size $O(\frac{1}{\epsilon})$ for ranges defined by disks in 
$\mathbb{R}^2$, and of size $O(\frac{1}{\epsilon} \log \frac{1}{\epsilon})$ for ranges defined by balls in 
$\mathbb{R}^d$, for any constant dimension $d \geq 3$~\cite{toth2017handbook}.
In both cases, $\epsilon$-nets can be computed in polynomial time.

Building on this result, we now present our algorithm.

\begin{theorem}
\label{thm:MCC-upperbound}
Given a $d$-Euclidean election $E = (\mathcal{C},V)$, we can compute in polynomial time a size 
$k$ committee with minimax score $O(m/k)$ for $d=2$ and score $O((m/k)\log k)$ for $d \geq 3$, where $m = |\mathcal{C}|$.
\end{theorem}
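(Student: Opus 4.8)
The plan is to obtain the committee as a single $\epsilon$-net of the candidate set $\mathcal C$, regarded as a point set in $\mathbb R^d$, with respect to the range space defined by balls, for a carefully chosen value of $\epsilon$. The point is that "voter $v$ is satisfied to within rank $t$" is exactly the statement that the committee hits the ball around $f(v)$ containing $v$'s $t$ nearest candidates, and such balls are $\epsilon$-heavy once $t\ge\epsilon m$.

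Concretely, fix a target rank $t$ (to be chosen below) with $1\le t\le m$, set $\epsilon:=t/m$, and for each voter $v$ let $r_v$ be the Euclidean distance from $f(v)$ to its $t$-th closest candidate and $B_v$ the closed ball of radius $r_v$ centered at $f(v)$. Then $|\mathcal C\cap B_v|\ge t=\epsilon m$, so $B_v$ is $\epsilon$-heavy. Using the cited construction, compute in polynomial time an $\epsilon$-net $A\subseteq\mathcal C$ of the ball range space on $\mathcal C$. By definition $A\cap B_v\neq\emptyset$ for every $v$, so $A$ contains a candidate $c$ with $\mathrm{dist}(f(v),f(c))\le r_v$; since (assuming, as is standard, that $f$ is in general position, i.e. all distances $\mathrm{dist}(f(v),f(c'))$ are distinct — a WLOG achieved by an infinitesimal perturbation of $f$ that preserves every strict preference) at most $t$ candidates lie within distance $r_v$ of $f(v)$, we get $\sigma_v(c)\le t$, hence $\sigma_v(A)\le t$. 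Therefore $\sigma(A)=\max_{v}\sigma_v(A)\le t$, and padding $A$ with arbitrary extra candidates up to size exactly $k$ only decreases the score, so we obtain a size-$k$ committee of score at most $t$.

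It remains to make $t$ as small as possible while keeping the size of the $\epsilon$-net at most $k$. For $d=2$ the net has size $O(1/\epsilon)=O(m/t)$, so choosing $t=\Theta(m/k)$ with a sufficiently large constant makes the net have size $\le k$, yielding score $O(m/k)$. For $d\ge 3$ the net has size $O\!\big(\tfrac1\epsilon\log\tfrac1\epsilon\big)=O\!\big(\tfrac mt\log\tfrac mt\big)$; here I would take $t=\Theta\!\big(\tfrac mk\log k\big)$, so that $\tfrac mt=\Theta(k/\log k)$ and $\log\tfrac mt=O(\log k)$, whence the net has size $O\!\big(\tfrac k{\log k}\cdot\log k\big)=O(k)$, again $\le k$ after fixing the constant, with score $O\!\big(\tfrac mk\log k\big)$. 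Degenerate cases (e.g. $k\ge m$, or a target $t\ge m$) are trivially handled by returning all of $\mathcal C$.

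All steps run in polynomial time: computing the radii $r_v$ and the ranks is immediate, and the $\epsilon$-nets are polynomial-time constructible by the cited results. The only genuinely delicate point is the parameter tuning in the $d\ge 3$ case: one must push $t$ slightly below $\tfrac mk\log k$ (equivalently $\tfrac mt$ slightly below $k/\log k$) so that the extra $\log(1/\epsilon)$ factor in the net-size bound is absorbed into the budget $k$; the tie-breaking issue is minor and disposed of by the general-position assumption, and tightness of these bounds is addressed separately by the hardness constructions.
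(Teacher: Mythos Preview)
Your proof is correct and follows essentially the same approach as the paper: construct for each voter a ball containing its top $t$ candidates, observe these balls are $\epsilon$-heavy for $\epsilon=t/m$, and take an $\epsilon$-net, tuning $t$ so that the known $O(1/\epsilon)$ (for $d=2$) or $O((1/\epsilon)\log(1/\epsilon))$ (for $d\ge3$) net-size bound comes out to at most $k$. One small slip: in your final paragraph you write ``push $t$ slightly below $\tfrac{m}{k}\log k$,'' but you want $t$ at least a sufficiently large constant times $\tfrac{m}{k}\log k$ (larger $t$ means larger $\epsilon$ and hence a smaller net); your main-body calculation gets this right.
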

\begin{proof}
In order to convey the intuition more clearly, let us first show how to find an $O(k)$-size committee with score at most $\lceil (m / k) \log k \rceil$.
Given a $d$-Euclidean election, let $\mathcal{C}$ be the set of the $m$ candidates
with their embedding in $\mathbb{R}^d$.
For each voter $v$, we consider a $d$-dimensional ball $R_v$ centered at $v$ containing the
$\lceil (m/k) \log k \rceil$ closest points of $\mathcal{C}$ to $v$. 
Let $\mathcal{R}$ be the set of all these balls.
Each ball of $\mathcal{R}$ is $\epsilon$-heavy for $\epsilon = \log k/k$ because it contains an $\epsilon$-fraction of the $m$ candidates.
Therefore, in polynomial time we can find an $\epsilon$-net $T \subseteq \mathcal{C}$ for $(\mathcal{C},\mathcal{R})$ of size $O(\frac{1}{\epsilon} \log \frac{1}{\epsilon}) = O(k)$.
By the definition of $\epsilon$-net, $T$ contains at least one point from each $R_v$, and
thus points of $T$ form a committee of size $O(k)$ with minimax score 
$\lceil (m/k) \log k \rceil$.

To reduce the committee size to $k$ while increasing the score by only a constant factor,
we enlarge each ball $R_v$ to include the $\alpha (m / k) \log k$ closest candidates of $v$, for an appropriate constant $\alpha$.
Each ball is now $\epsilon'$-heavy, for $\epsilon' = \alpha \log k/k$, which guarantees 
an $\epsilon'$-net $T \subseteq \mathcal{C}$ for $(\mathcal{C},\mathcal{R})$ of size $O(\frac{1}{\epsilon'} \log \frac{1}{\epsilon'}) = O(k/\alpha)$.
With an appropriate choice of $\alpha$, we can ensure $|T| \leq k$ and achieve the score of
$\alpha (m / k) \log k = O((m/k) \log k)$.

When $d \leq 2$, the $\epsilon$-nets of this set system have size $O(\frac{1}{\epsilon})$, and
therefore we can construct a committee of size $k$ with score $O(m/k)$.
\end{proof}

The $O((m/k) \log k)$ and $O(m/k)$ bounds of Theorem~\ref{thm:MCC-upperbound} are essentially the best possible.
In particular, we can construct instances of Euclidean elections in which no size-$k$ 
committee can achieve the minimax score better than $\Omega(m/k)$. %please see supplementary material for details.
\begin{theorem}
\label{thm:MCC-lowerbound}
For any $d \geq 1$, there exist Euclidean elections in $\mathbb{R}^d$ such that any committee $T \subseteq \mathcal{C}$ of size $k$ has score $\Omega(m/k)$, where $m = |\mathcal{C}|$.
\end{theorem}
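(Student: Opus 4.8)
The plan is to exhibit, for every $d\ge 1$ and every $m\ge k$, an $m$-candidate $d$-Euclidean election in which no committee of size $k$ scores below $\Omega(m/k)$, using nothing more than an equally spaced line of candidates. I would place candidate $c_i$ at the point $(i,0,\dots,0)\in\mathbb{R}^d$ for $i=1,\dots,m$, and co-locate a voter $v_i$ at the same point (so $V=\mathcal{C}$, in the style of the hardness instances; making many copies of each voter changes nothing). Each voter then ranks the candidates by their distance along the line, so this is a legal Euclidean realization. We may assume $k\le m/2$, since for larger $k$ the claimed bound $\Omega(m/k)$ is a constant and $\sigma(T)\ge 1$ holds trivially.

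The core is a gap argument. Fix an arbitrary committee $T=\{c_{i_1},\dots,c_{i_k}\}$ with $i_1<\cdots<i_k$. The $k$ chosen indices cut $\{1,\dots,m\}\setminus T$ into $k+1$ consecutive blocks --- the block left of $i_1$, the $k-1$ blocks strictly between consecutive $i_j$, and the block right of $i_k$ --- whose sizes sum to $m-k$, so by pigeonhole some block $G$ has size $g\ge\lceil(m-k)/(k+1)\rceil$. If $G$ is an interior block $(i_j,i_{j+1})$, I take $\ell$ to be an integer nearest the midpoint $(i_j+i_{j+1})/2$ and examine the voter $v_\ell$: its closest committee member is $c_{i_j}$ or $c_{i_{j+1}}$, at some integer distance $D$ with $D\ge g/2$, while every candidate sitting at an integer position in the open interval $(\ell-D,\ell+D)$ --- there are $2D-1\ge g-1$ of these, all inside $[1,m]$ --- is strictly closer to $v_\ell$; hence $v_\ell$'s favourite committee member has rank at least $2D\ge g$. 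If instead $G$ is a boundary block, say $(i_k,m]$, then for the voter $v_m$ the closest committee member $c_{i_k}$ is at distance $g$ while the $g$ candidates at positions $m-g+1,\dots,m$ are all closer, so $v_m$'s score is at least $g+1$ (the block left of $i_1$ is symmetric). In every case $\sigma(T)\ge g\ge(m-k)/(k+1)=\Omega(m/k)$ because $k\le m/2$, and since $T$ was arbitrary this proves the theorem. The same one-dimensional instance and argument work verbatim for all $d\ge1$, since everything lives on a line.

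I do not expect a serious obstacle --- the estimate is elementary --- but the points that need care are: \emph{(i)} choosing the witness voter close to the midpoint of the largest interior block, so that it is roughly equidistant from the two flanking committee members and the number of strictly closer candidates is $\Theta(g)$ rather than $O(1)$; \emph{(ii)} treating the two boundary blocks separately, since the symmetric midpoint count does not apply there; and \emph{(iii)} being mildly careful about distance ties (several candidates equidistant from the witness voter) when turning a distance bound into a rank bound --- ties only ever raise a rank, so they do not hurt. If one prefers an explicit closed form rather than $\Omega(\cdot)$, restricting to, say, $k\le m/4$ yields $\sigma(T)\ge m/(4k)$ directly.
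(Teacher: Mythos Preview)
Your proof is correct and follows essentially the same approach as the paper: a one-dimensional equally-spaced instance together with a pigeonhole argument on the gaps between consecutive committee members. The paper's version is a brief sketch (with voters at all of $[n]$ and candidates at every $(n/m)$-th position) whereas you co-locate voters and candidates and carry out the gap/rank counting more carefully, but the underlying idea is the same.
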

\begin{proof}
We give an example of a one-dimensional election instance where any size-$k$ committee has score $\Omega(m/k)$ which can then be embedded in higher dimensions.

Consider a set of $n$ voters on the real line at positions in $[n]$, and a set of $m$ candidates at positions $i (n /m)$, for $i \in [m]$, assuming $n$ is a multiple of $m$.
An optimal size-$k$ committee corresponds to a partition of the line into $k$ pieces, each containing $\Theta (n/k)$ voters.
Each such group also contains $\Theta(m/k)$ candidates, and only one of them is in the committee. 
Therefore, in each group, there is at least one voter (e.g., the leftmost or the rightmost) whose score is $\Omega(m /k)$.
\end{proof}

\subsection[case-c-subseteq-v]{Lower Bound on the Optimal Score when $\mathcal{C} \subseteq V$}
We now consider a special class of election instances \emph{when the candidate set is a subset of the voter set} (namely, $\mathcal{C} \subseteq V$).
We note that most representative elections satisfy this condition because each candidate is also a voter.

Notice that Theorem~\ref{thm:MCC-lowerbound} cannot be used to bound the approximation ratio of our algorithm in Theorem~\ref{thm:MCC-upperbound}, because the lower bound is only derived for the hard instances constructed in the proof.
In what follows, we prove that there is a lower bound of $\Omega(m/k)$ on the optimal score for all the instances where $\mathcal{C} \subseteq V$.
%Therefore, in this case, the algorithm in Theorem~\ref{thm:MCC-upperbound} gives a constant factor approximation.

\begin{theorem}
\label{thm-lowerbound2}
For a $d$-Euclidean election $E = (\mathcal{C},V)$ with $\mathcal{C} \subseteq V$, any size-$k$ committee in $E$ has score $\Omega(m/k)$, where $m = |\mathcal{C}|$.
\end{theorem}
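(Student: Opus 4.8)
The plan is to argue that when $\mathcal{C} \subseteq V$, the candidates themselves are voters, so any committee of size $k$ leaves many candidate-voters with a poor representative. The key geometric fact I would use is a packing/separation argument: among the $m$ candidates, one can always find a subset of $\Omega(m/k)$ candidates that are ``mutually far'' in ranking terms relative to the committee, so that at least one of them has a committee representative of rank $\Omega(m/k)$. Concretely, I would proceed as follows. Fix a size-$k$ committee $T$. Partition the $m$ candidates according to which committee member is closest (a Voronoi-type partition of $\mathcal{C}$ induced by $T$); by pigeonhole some committee member $t \in T$ is the nearest committee member of at least $\lceil m/k \rceil$ candidates. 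Call this set $S \subseteq \mathcal{C}$, with $|S| = s \geq \lceil m/k \rceil$. Now consider the candidate $c^\star \in S$ that is \emph{farthest} from $t$. Since $c^\star$ is a voter (because $\mathcal{C} \subseteq V$), its score is $\min_{c \in T}\sigma_{c^\star}(c) = \sigma_{c^\star}(t)$, because $t$ is by construction the closest committee member to $c^\star$. But $\sigma_{c^\star}(t)$ counts how many candidates are at least as close to $c^\star$ as $t$ is, and I claim all $s-1$ other candidates of $S$ qualify: each $c \in S$ has $\mathrm{dist}(c, t) \le \mathrm{dist}(c, t')$ for all $t' \in T$, and in particular $\mathrm{dist}(c,t)\le \mathrm{dist}(c^\star,t)$ by choice of $c^\star$ as the farthest; combined with the triangle inequality this should let me conclude $\mathrm{dist}(c^\star, c)$ is small enough that $c$ ranks ahead of (or tied with) $t$ in $c^\star$'s list — giving $\sigma_{c^\star}(t) \ge s = \Omega(m/k)$.

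The main obstacle is making the last step — ``$c$ ranks ahead of $t$ in $c^\star$'s preference list'' — actually rigorous, because a priori $\mathrm{dist}(c, t)$ being small does not immediately bound $\mathrm{dist}(c^\star, c)$. I expect the clean way around this is \emph{not} to go through $t$ at all, but to work directly inside $S$: among the $s$ candidates in $S$, pick the one, say $c^\star$, that maximizes the distance to its nearest \emph{other} member of $S$, or alternatively argue about the candidate whose nearest committee member is farthest. The honest version of the argument is probably: let $c^\star \in S$ be farthest from $t$; every other $c \in S$ satisfies $\mathrm{dist}(c^\star, c) \le \mathrm{dist}(c^\star, t) + \mathrm{dist}(t, c) \le 2\,\mathrm{dist}(c^\star, t)$, and moreover $t$'s distance to $c^\star$ is at least $\mathrm{dist}(c^\star,t)$; but to force $c$ ahead of $t$ I need $\mathrm{dist}(c^\star,c) \le \mathrm{dist}(c^\star,t)$, which the above does not give. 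So the fix is to instead select $c^\star$ as the member of $S$ \emph{closest} to $t$: then every $c \in S$ has $\mathrm{dist}(c^\star,t) \le \mathrm{dist}(c,t)$... still not quite it. The genuinely correct selection, which I would commit to, is to order $S = \{c_1, c_2, \dots, c_s\}$ by increasing distance from $t$ and consider the voter $c_s$ (the farthest): for each $i < s$ the point $c_i$ lies in the ball around $t$ of radius $\mathrm{dist}(t, c_s)$; this ball has $c_s$ on its boundary, and while $c_i$ need not be within $\mathrm{dist}(c_s, t)$ of $c_s$, a second pigeonhole on an angular decomposition of the ball (into $O(1)$ cones, the constant depending on $d$) recovers $\Omega(s)$ points within a cone, and within a single narrow cone the farthest point dominates — i.e., for $c_i, c_j$ in the same cone with $\mathrm{dist}(t,c_i)\le \mathrm{dist}(t,c_j)$ one gets $\mathrm{dist}(c_j, c_i) \le \mathrm{dist}(c_j, t)$ provided the cone half-angle is at most $60^\circ$. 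Applying the first pigeonhole (Voronoi cell of $T$, size $\ge m/k$) and then this second pigeonhole (cones at $t$, $O_d(1)$ of them) yields a set $S'$ of $\Omega_d(m/k)$ candidates all in one cone at $t$; the farthest of them, as a voter, then sees all the others ranked ahead of its nearest committee member $t$, so its score is $\Omega_d(m/k)$.

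Wrapping up: the theorem as stated suppresses the dependence of the hidden constant on $d$, which is fine since $d$ is a fixed constant throughout the paper. I would present the proof in the order (1) Voronoi partition and first pigeonhole to get $S$ with $|S| \ge m/k$ inside one cell with center $t$; (2) cone decomposition at $t$ and second pigeonhole to get $S' \subseteq S$, $|S'| = \Omega_d(|S|)$, all within a $60^\circ$ cone; (3) the elementary geometric lemma that in such a cone the point farthest from the apex $t$ is at least as close to every other cone point as it is to $t$ (law of cosines with the angle $\le 60^\circ$ and the monotone radii); (4) conclude that this farthest point $c^\star \in S' \subseteq \mathcal{C} \subseteq V$ has $\sigma_{c^\star}(T) = \sigma_{c^\star}(t) \ge |S'| = \Omega_d(m/k)$. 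Step (3) is the crux and the place where the hypothesis $\mathcal{C}\subseteq V$ is essential — without it $c^\star$ need not be a voter and the argument collapses, which is exactly why Theorem~\ref{thm-lowerbound2} is stated separately from Theorem~\ref{thm:MCC-lowerbound}.
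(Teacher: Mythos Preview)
Your final plan (steps (1)--(4)) is correct and is essentially the same argument as the paper's: both pigeonhole over the Voronoi cells of the committee to find a member $t$ with $\Omega(m/k)$ assigned candidate-voters, then perform a $60^\circ$ cone decomposition centered at $t$ and use the same law-of-cosines/angle fact to show that the farthest candidate-voter in a cone sees every other cone member ranked ahead of $t$. The paper packages the cone step as a standalone lemma (any single candidate is a top-$s$ choice of at most $O_d(s)$ candidate-voters) and only spells out $d=2$, but the content is identical; one small correction to your write-up is that you need the cone's angular \emph{diameter}, not its half-angle, to be at most $60^\circ$ for step (3) to hold.
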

\begin{proof}
We will only show the proof in $d = 2$ dimensions, since a similar idea works for all constant dimensions $d \geq 3$.
Consider a sub-election $E' = (\mathcal{C}', V')$ of $E$ such that $\mathcal{C}' = V' = \mathcal{C}$, i.e., $E'$ only contains those points in $E$ which have both a candidate and a voter placed on them.
Hence, the number of candidates and the number of voters in $E'$ is $m$.
We will show that any size-$k$ committee in $E'$ has score $\Omega(m/k)$.
Since $V' \subseteq V$ and $\mathcal{C}' = \mathcal{C}$, it implies that the score of any size-$k$ committee in $E$ is also $\Omega(m/k)$.
Therefore, for the rest of this proof, we will only work with election $E'$.

\begin{figure}[t!]
    \centering
    \includegraphics[scale=0.45]{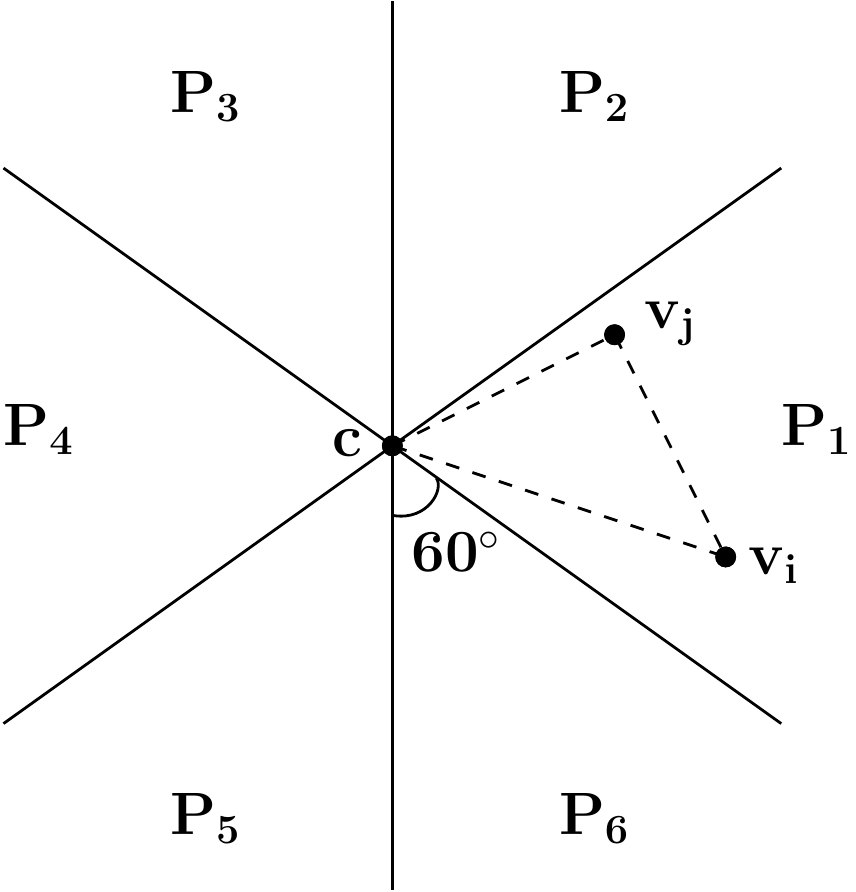}
    \caption{Partition of space across the candidate $c$. 
    Adjacent solid lines form a $60^{\circ}$ angle at $c$. 
    The part $P_1$ contains the voters $v_i, v_j$.}
    \label{fig:coincide}
\end{figure}

We begin by showing the following structural lemma for the constructed election $E'$.
\begin{lemma}
\label{lem:rank-lower-bound}
In $E' = (\mathcal{C}', V')$, each candidate belongs to the set of the closest $s$ candidates for at most $6(s-1)+1$ voters.
In other words, a candidate can be one of the top $s$ choices for at most $6(s-1)+1$ voters.
\end{lemma}
\begin{proof}
The proof is trivial for $s = 1$; hence, we consider the case when $1 < s \leq m$.
Consider an arbitrary candidate $c \in \mathcal{C}'$.
We equipartition the space into six parts $P_1, P_2, \ldots, P_6$ using three lines across $c$ (see Figure~\ref{fig:coincide}).
We assume that no candidate or voter lies on any of these three lines (note that this can be ensured by slightly moving the candidates/voters while ensuring that the rankings for each voter does not change).

We claim that $c$ belongs to the set of the closest $s$ candidates for at most $s$ voters in each part $P_i$ for $i \in [6]$.
Without loss of generality, we prove our claim for $P_1$.
Let $\hat{P} = \{\hat{p}_1, \hat{p}_2, \ldots, \hat{p}_q\}$ where $q \geq s$ (when $q \leq s$, the proof is trivial) be the points in $P_1$ which have a voter and a candidate.
Note that $\hat{P}$ is sorted according to the distance of it's points from $c$.
Let $\hat{V} = \{\hat{v}_1, \hat{v}_2, \ldots, \hat{v}_q\}$ and $\hat{C} = \{\hat{c}_1, \hat{c}_2, \ldots, \hat{c}_q\}$ be the set of voters and candidates, respectively, such that $\hat{v}_i, \hat{c}_i$ are located at $\hat{p}_i$.
We will show that for each pair $i,j \in [q]$ with $j < i$, $v_i$ prefers $c_j$ to $c$.
We need to show that $\text{dist}(v_i, c_j) < \text{dist}(v_i, c)$.
Recall that $\text{dist}(v_i, c) > \text{dist}(v_j,c)$.
Hence, using the sine rule, we know $\angle cv_jv_i > \angle cv_iv_j$.
Furthermore, if $\text{dist}(v_i, c_j) > \text{dist}(v_i, c)$, then using the sine rule, we get $\angle v_icv_j > \angle cv_jv_i > \angle cv_iv_j$.
But observe that $\angle v_icv_j < 60^\circ$.
This implies that $\angle v_icv_j + \angle cv_jv_i + \angle cv_iv_j < 180^\circ$, which is a contradiction.
Hence, $\text{dist}(v_i, c_j) < \text{dist}(v_i, c)$.
This completes the proof of Lemma~\ref{lem:rank-lower-bound}.
\end{proof}

Let $T = \{t_1, t_2, \ldots, t_k\}$ be an optimal committee in $E'$.
We partition the set of voters $V'$ into $k$ parts $V_1', V_2',\ldots, V_k' $ such that voters in $V_i'$ have $t_i$ as their most preferred candidate in $T$ for $i\in [k]$.
Using an averaging argument, we know that there exists some index $j \in [k]$ such that $V_j'$ contains at least $m/k$ voters.
Since all the voters in $V_j'$ are represented by $t_j$, using Lemma~6, we conclude that there is a voter $v \in V_j'$ such that $\sigma_v(T) \geq \frac{m}{6(k-1)+1}$.
This completes the proof of Theorem~\ref{thm-lowerbound2}.
\end{proof} 

In light of the above lower bound, it follows that
whenever the candidate set is a subset of the voter set, Theorem~\ref{thm:MCC-upperbound} implies an $O(1)$-approximation of the minimax score for $d=2$, and an $O(\log k)$-approximation for $d \geq 3$.

\begin{corollary}
Given a $d$-Euclidean election $E = (\mathcal{C}, V)$ with $\mathcal{C} \subseteq V$, we can compute in polynomial time a size-$k$ committee with minimax score within a constant factor of the optimal for $d=2$, and within a factor of $O(\log k)$ for $d \geq 3$.
\end{corollary}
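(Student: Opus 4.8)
The plan is to simply combine the algorithmic upper bound of \cref{thm:MCC-upperbound} with the instance-independent lower bound of \cref{thm-lowerbound2}, which applies precisely because $\mathcal{C}\subseteq V$. First I would run the $\epsilon$-net based algorithm of \cref{thm:MCC-upperbound} on the input election $E=(\mathcal{C},V)$; this runs in polynomial time and outputs a committee $T\subseteq\mathcal{C}$ of size exactly $k$ whose minimax score is at most $c_1\,(m/k)$ when $d=2$ and at most $c_2\,(m/k)\log k$ when $d\geq 3$, for absolute constants $c_1,c_2>0$.

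Next I would invoke \cref{thm-lowerbound2}: since $\mathcal{C}\subseteq V$, \emph{every} size-$k$ committee in $E$ --- in particular an optimal one --- has score at least $c_3\,(m/k)$ for some absolute constant $c_3>0$ (concretely, the proof via \cref{lem:rank-lower-bound} gives a voter whose score is at least $\tfrac{m}{6(k-1)+1} = \Theta(m/k)$). Hence if $\sigma^\star$ denotes the optimal minimax score for size-$k$ committees in $E$, then $\sigma^\star \geq c_3\,(m/k)$.

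Dividing the two bounds, the committee $T$ returned has score at most $(c_1/c_3)\,\sigma^\star = O(\sigma^\star)$ when $d=2$, and at most $(c_2/c_3)(\log k)\,\sigma^\star = O(\log k)\cdot\sigma^\star$ when $d\geq 3$, which are exactly the claimed approximation guarantees. The one point requiring a moment's care is the degenerate regime where $m/k = O(1)$ (e.g.\ $k\geq m$): there the returned score is trivially $O(1)$ while $\sigma^\star\geq 1$, so the statement holds a fortiori, and in any case this is absorbed into the asymptotic notation. I do not anticipate any real obstacle --- all the substantive work lives in the two cited theorems, and this corollary is merely the observation that the ratio of their bounds is $O(1)$ for $d=2$ and $O(\log k)$ for $d\geq 3$.
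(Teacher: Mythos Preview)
Your proposal is correct and matches the paper's approach exactly: the corollary is derived simply by combining the algorithmic upper bound of \cref{thm:MCC-upperbound} with the instance-wise lower bound of \cref{thm-lowerbound2}, and the paper in fact states the corollary with no more justification than this. Your extra care with the degenerate $m/k=O(1)$ case is fine but not strictly needed, since $\sigma^\star\geq 1$ already handles it.
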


\subsection[r-borda]{Minimax Committees for the $r$-Borda Rule}
A natural generalization of the Chamberlin Courant rule is the so-called \textit{$r$-Borda rule} where the score of each voter is determined by its nearest $r$ candidates in the committee for a given $r \leq k$.
More specifically, the score of a voter $v$ with respect to a committee is the sum of the ranks of its nearest $r$ candidates in the committee in the preference list of $v$.
The minimax score of a committee $T$ is the maximum over all voter scores.
Our goal is to find a committee $T$ of size $k$ that minimizes $\sigma(T)$, where
$$ \sigma(T) = \max\limits_{v \in V} \left(\min\limits_{Q \subseteq T, |Q| = r} \left(\sum\limits_{c \in Q} \sigma_v(c)\right)\right).$$

We show that for any election $E = (\mathcal{C},V)$ in $\mathbb{R}^d$, we can compute a size-$k$ committee with an $r$-Borda  score of $O((r^2m/k) \log k)$ using a modification of the algorithm in Theorem~\ref{thm:MCC-upperbound}.

\begin{theorem}
\label{thm:r-Borda-upperbound}
Given an election in any fixed dimension $d$, we can find in polynomial time a size-$k$ committee with minimax $r$-Borda score $O((r^2m/k) \log k)$.
Furthermore, if $d \leq 2$, the score can be further improved to $O(r^2m/k)$.
\end{theorem}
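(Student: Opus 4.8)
The plan is to mimic the $\epsilon$-net construction of Theorem~\ref{thm:MCC-upperbound}, but with balls large enough to guarantee that every voter not only has \emph{one} committee member nearby but $r$ of them, all of moderate rank. First I would, for each voter $v$, define a ball $R_v$ centered at $v$ containing the $\lceil \alpha (rm/k)\log k\rceil$ closest candidates, for a suitable constant $\alpha$ (for $d\le 2$ I would drop the $\log k$ factor, using the smaller $O(1/\epsilon)$-size $\epsilon$-nets for disks). As in Theorem~\ref{thm:MCC-upperbound}, each $R_v$ is $\epsilon$-heavy with $\epsilon = \alpha r\log k / k$, so there is a polynomial-time computable $\epsilon$-net $T\subseteq\mathcal C$ with $|T| = O(\frac1\epsilon\log\frac1\epsilon) = O(k/r)$ (or $O(k/r)$ without the extra log for $d\le 2$), which we can scale to have size at most $k$ by choosing $\alpha$ appropriately.

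Next I would analyze the $r$-Borda score of this committee. The subtlety compared to the $1$-Borda case is that an $\epsilon$-net only guarantees $T\cap R_v\ne\emptyset$, i.e.\ \emph{one} committee member inside $R_v$ — not $r$ of them. The key observation is that $R_v$ contains at most $\lceil\alpha(rm/k)\log k\rceil$ candidates total, so every candidate inside $R_v$ has rank at most $\lceil\alpha(rm/k)\log k\rceil$ in $v$'s list. Thus, even if $v$'s $r$ nearest committee members are \emph{not} all in $R_v$, we still need to bound the rank of the worst of them. Here I would use the size bound $|T|=O(k/r)$ differently, or — more simply — enlarge the ball so that it contains the $r$ nearest committee members automatically: take $R_v$ to hold the closest $\lceil \alpha (r^2 m/k)\log k\rceil$ candidates, so that $R_v$ is $\epsilon$-heavy with $\epsilon = \alpha r^2\log k/k$ and the net $T$ has size $O(\frac1\epsilon\log\frac1\epsilon)=O(k)$; then I would argue that, since $T$ is an $\epsilon$-net for the family of balls of this size around every voter, and since shrinking $R_v$ to its $1/r$-fraction sub-ball (the closest $\lceil\alpha(rm/k)\log k\rceil$ candidates) is still $\epsilon$-heavy, in fact $T$ hits that smaller ball too — but that still gives only one point. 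The clean route is: build the net so that $T$ intersects, for \emph{every} voter $v$ and \emph{every} prefix of its candidate list of length $\lceil\alpha(rm/k)\log k\rceil$ shifted appropriately, i.e.\ run the net argument so that $|T\cap R_v|\ge r$; this follows because an $\epsilon$-net of a set system where we have replaced each range by $r$ disjoint "shells" forces $r$ hits, or equivalently because a $(1/r)$-net forces each $\epsilon$-heavy range to be hit and summing over $r$ nested heavy ranges of sizes $j\cdot\lceil\alpha(m/k)\log k\rceil$ for $j=1,\dots,r$ yields $r$ committee members of ranks at most $\lceil\alpha(m/k)\log k\rceil, 2\lceil\alpha(m/k)\log k\rceil,\dots$; their sum is $O((r^2m/k)\log k)$.

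Concretely, then, I would define for each voter $v$ and each $j\in\{1,\dots,r\}$ the ball $R_{v,j}$ containing its $j\lceil\alpha(m/k)\log k\rceil$ closest candidates, let $\mathcal R$ be the family of all $R_{v,j}$, observe every member is $\epsilon$-heavy for $\epsilon=\alpha\log k/k$, take an $\epsilon$-net $T$ of size $O(k/\alpha)$, and choose $\alpha$ so $|T|\le k$. For a voter $v$, each $R_{v,j}$ contains a committee member, and these can be chosen to be distinct (if $T\cap R_{v,j}\subseteq R_{v,j-1}$ we move to the next shell), so $v$ has $r$ committee members $c_1,\dots,c_r$ with $\sigma_v(c_j)\le j\lceil\alpha(m/k)\log k\rceil$, whence $\min_{|Q|=r}\sum_{c\in Q}\sigma_v(c)\le\sum_{j=1}^r j\lceil\alpha(m/k)\log k\rceil=O((r^2m/k)\log k)$. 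The $d\le 2$ improvement drops the $\log k$ everywhere by using disk $\epsilon$-nets of size $O(1/\epsilon)$. The main obstacle I anticipate is precisely making rigorous the claim that the $\epsilon$-net can be taken to furnish $r$ \emph{distinct} committee members per voter (rather than one); the shell/nested-ball device above handles it, but one must verify that the net hits every shell and that the greedy reassignment to distinct members never runs out, which uses that there are only $r$ shells and $|T\cap R_{v,r}|$ could in principle be $1$ — so in fact the nested shells $R_{v,1}\subset\cdots\subset R_{v,r}$ must each be listed as separate ranges in $\mathcal R$ so the net hits each, and then distinctness follows by a simple induction on $j$.
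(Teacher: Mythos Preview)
Your concrete plan---take the nested balls $R_{v,1}\subset\cdots\subset R_{v,r}$ as ranges and compute a single $\epsilon$-net---does not deliver $r$ \emph{distinct} committee members per voter. The $\epsilon$-net property guarantees only $T\cap R_{v,j}\ne\emptyset$ for each $j$; since the balls are nested, a single candidate in $T\cap R_{v,1}$ already witnesses nonemptiness for every $R_{v,j}$, and nothing prevents $|T\cap R_{v,r}|=1$. The ``simple induction on $j$'' you invoke would require $|T\cap R_{v,j}|\ge j$ for all $j$ (this is precisely the Hall condition for a system of distinct representatives on a nested family), and the net gives no such guarantee. Your earlier aside about ``$r$ disjoint shells'' is the right instinct: if the ranges are the annuli $R_{v,j}\setminus R_{v,j-1}$, each is $\epsilon$-heavy and they are pairwise disjoint, so one net does yield $r$ distinct hits. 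But then the ranges are annuli rather than balls; bounded VC dimension still gives $O(\tfrac{1}{\epsilon}\log\tfrac{1}{\epsilon})$-size nets in any fixed dimension, yet the sharper $O(1/\epsilon)$ bound you rely on for the $d\le 2$ improvement is specific to disks, so the planar refinement does not follow from this route.

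The paper avoids both problems with an iterative device that stays entirely within ball ranges. Fix one ball $R_v$ per voter containing $\lceil\alpha(rm/k)\log k\rceil+r$ candidates, set $\epsilon=\alpha r\log k/k$, and repeat $r$ times: compute an $\epsilon$-net of size at most $k/r$ for the \emph{current} candidate set, add it to $T$, and delete it from $\mathcal{C}$. After $r$ rounds $|T|\le k$. For each voter, either every round contributes a point of $R_v$ to $T$, or in some round $R_v$ is no longer $\epsilon$-heavy; in the latter case at least $r$ of $R_v$'s original candidates have already been moved into $T$. Either way $|R_v\cap T|\ge r$, and since every candidate in $R_v$ has rank at most $\lceil\alpha(rm/k)\log k\rceil+r$, the $r$-Borda score is at most $r\bigl(\lceil\alpha(rm/k)\log k\rceil+r\bigr)=O((r^2m/k)\log k)$. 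Because every round uses only ball ranges, the $O(1/\epsilon)$ disk nets apply verbatim in the plane and give the claimed $O(r^2m/k)$.
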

\begin{proof}
We only give a high-level idea of our algorithm as the rest of the details are similar to the proof of Theorem~\ref{thm:MCC-upperbound}.

For each voter $v \in V$, we create a ball $R_v$ centered at $v$ that contains $\lceil \alpha (rm /k) \log k \rceil + r$ candidates for a sufficiently large constant $\alpha$.
Let $\mathcal{R} = \{R_v: v \in V\}$ and $\epsilon = \alpha (r/k) \log k$.
Our algorithm runs in $r$ rounds.
In each round, we first compute an $\epsilon$-net $T_0 \subseteq \mathcal{C}$ for $(\mathcal{C},\mathcal{R})$ of size $O(\frac{1}{\epsilon} \log \frac{1}{\epsilon})$.
Since $\alpha$ is sufficiently large, we have $|T_0| \leq k/r$.
Then we add the candidates in $T_0$ to the committee $T$ and remove them from $\mathcal{C}$.
After $r$ rounds, we obtain our committee $T$, which is of size at most $k$.
To see that the $r$-Borda score of $T$ is $O((r^2m/k) \log k)$, we observe that each ball $R_v$ contains at least $r$ candidates in $T$, which implies that the score of $v$ with respect to $T$ is $r(\lceil \alpha (rm /k) \log k \rceil + r) = O((r^2m/k) \log k)$.
Recall that $T$ is the (disjoint) union of $r$ $\epsilon$-nets.
If $R_v$ contains at least one candidate in each of the $r$ $\epsilon$-nets, then $|R_v \cap T| \geq r$, and we are done.
So suppose $R_v$ does not contain any candidate in the $\epsilon$-net generated in the $i$-th round.
This means $R_v$ is not $\epsilon$-heavy in the $i$-th round.
But $R_v$ contains at least $m/\epsilon + r$ candidates in the original $\mathcal{C}$.
Therefore, in the $i$-th round, at least $r$ candidates in $R_v$ are removed from $\mathcal{C}$, and they are already included in $T$.
It follows that $|R_v \cap T| \geq r$.

In $\mathbb{R}^2$, the score of the committee can be further improved to $O(r^2m/k)$ using the same approach with $\epsilon$-nets of size $O(1/\epsilon)$ for disks.
This completes the proof of Theorem~\ref{thm:r-Borda-upperbound}.
\end{proof}

We observe that the bound $O(r^2m/k)$ in the above theorem is tight.
In particular, there are instances for which an optimal committee's $r$-Borda score 
is $\Omega(r^2m/k)$---this can be verified using the instance described in the proof of Theorem~\ref{thm:MCC-lowerbound}---and so this serves as the benchmark score for $r$-Borda.

\section{Approximation by Relaxing the Committee Size}
\label{sec:approx-committee-size}
The hardness result of Theorem~\ref{thm-hardtoappx} rules out any efficient algorithm for 
Euclidean election with a good approximation guarantee \emph{but only under the rigid 
constraint that the committee size is at most $k$}.
%Motivated by this complexity barrier
In this section, we study the problem in the setting where we are allowed to relax the committee 
size.
Specifically, we ask the following natural question: 
{\sl Can we efficiently compute a committee of size slightly larger than $k$ whose score is close to the optimal score of a size-$k$ committee?}

%We prove two results: one positive and the other negative.
%The positive result 
First, we show that if we are allowed to increase the committee size by a (small) \emph{multiplicative} factor, 
then one can achieve (or improve) the optimal score of a size-$k$ committee.

\begin{theorem} \label{thm-(1+eps)k}
Given a $d$-Euclidean election, we can compute in polynomial time a committee of size $(1+\epsilon) k$, given any fixed $\epsilon > 0$, when $d=2$ and of size $O(k \log m)$, where $m$ is the number of candidates, when $d\geq 3$,
whose score is smaller than or equal to the score of any size-$k$ committee.
\end{theorem}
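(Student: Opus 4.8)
The plan is to reduce the size-relaxation problem to a covering/hitting-set computation that can be solved exactly in polynomial time because the optimal score $\sigma^\star$ is one of only $m$ possible values. First I would guess $\sigma^\star \in \{1,\dots,m\}$ by exhaustive search: for each candidate value $s$, I will try to build a committee whose score is at most $s$ using roughly $(1+\epsilon)k$ (resp.\ $O(k\log m)$) candidates, and I will return the result for the smallest $s$ for which this succeeds; by definition, that smallest successful $s$ is at most the true optimal score of any size-$k$ committee. So the crux is: given a target score $s$ for which \emph{some} size-$k$ committee achieves score $\le s$, construct (in polynomial time) a committee of size $(1+\epsilon)k$ in the plane, or $O(k\log m)$ in higher dimensions, also achieving score $\le s$.

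The key step is to phrase "score $\le s$" as a geometric set-cover instance. For each voter $v$, let $R_v$ be the disk (ball in $\mathbb{R}^d$) centered at $v$ containing the $s$ closest candidates of $v$; a committee $T$ has score $\le s$ exactly when $T$ hits every $R_v$. Dually, to each candidate $c$ we associate the set $S_c = \{v : c \in R_v\}$ of voters whose "top-$s$ ball" contains $c$; choosing a committee that hits all $R_v$ is exactly choosing a sub-family of the $S_c$'s that covers all voters. The assumed existence of a size-$k$ committee with score $\le s$ says this set-cover instance has a solution of size $k$. Now I invoke the standard LP-relaxation / $\epsilon$-net machinery for geometric set cover with bounded VC-dimension (the set system dual to balls in $\mathbb{R}^d$). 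Solving the covering LP gives a fractional cover of weight $\le k$; the Br\"onnimann–Goodrich / Even–Rawitz–Shahar rounding via $\epsilon$-nets then yields an integral cover of size $O(k)$ in general, and in the planar case, the near-optimal $O(1/\epsilon')$-size $\epsilon$-nets for disks give a cover of size $(1+\epsilon)k$ for any fixed $\epsilon>0$, while in $\mathbb{R}^d$ the $O((1/\epsilon')\log(1/\epsilon'))$-size nets give $O(k\log m)$ (using $\epsilon' = \Theta(1/k)$ so that $\log(1/\epsilon') = O(\log k) = O(\log m)$). All of these steps — forming the $R_v$, writing and solving the LP, and computing $\epsilon$-nets — are polynomial time.

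The main obstacle is the planar $(1+\epsilon)$ claim: an $O(1)$-approximation for geometric set cover with disks is routine, but squeezing the constant down to $1+\epsilon$ requires more than a single net-rounding step. I expect the right tool is the iterative-reweighting refinement of the Br\"onnimann–Goodrich framework together with the fact that for disks the $\epsilon$-net size is $c/\epsilon'$ with a small absolute constant $c$; by running $O(\log(1/\epsilon))$ reweighting rounds and choosing the net parameter as a function of $\epsilon$ and the current fractional weight, one drives the cover size to $(1+\epsilon)$ times the optimum. An alternative, possibly cleaner route is to exploit planarity directly: the set system of disks in the plane admits PTASs for set cover via local search (Mustafa–Ray), and a local-search PTAS would give the $(1+\epsilon)$ factor immediately; I would present whichever of the two is more self-contained. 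The higher-dimensional bound needs nothing beyond the generic net size $O((1/\epsilon')\log(1/\epsilon'))$, so there the only care needed is bookkeeping to confirm the committee size is $O(k\log m)$ and the score is exactly the guessed $s \le \sigma^\star$.
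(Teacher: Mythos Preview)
Your reduction is exactly the paper's: guess the score $s\in\{1,\dots,m\}$, form for each voter the ball $R_v$ containing its $s$ nearest candidates, observe that a committee of score $\le s$ is precisely a hitting set for $\{R_v\}_{v\in V}$, and use the existence of a size-$k$ hitting set to invoke an approximation algorithm for geometric hitting set. The paper commits immediately to the Mustafa--Ray local-search PTAS for disk hitting set in $\mathbb{R}^2$ (your ``alternative, possibly cleaner route''), and for $d\ge 3$ it simply runs the greedy hitting-set algorithm to get $O(k\log m)$ rather than going through LP rounding and $\epsilon$-nets; both routes are valid there.

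The one place your write-up is shaky is the first option you float for the planar $(1+\epsilon)$ bound: Br\"onnimann--Goodrich with linear-size nets for disks gives only an $O(1)$-approximation, and the ``$O(\log(1/\epsilon))$ reweighting rounds drive the cover to $(1+\epsilon)\cdot\mathrm{opt}$'' claim is not a known theorem. You should drop that and go straight to the local-search PTAS, which is what the paper does and which you already identify as the cleaner route.
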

\begin{proof}
We prove the result for a Euclidean election $E = (\mathcal{C},V)$ in $d=2$; 
the proof for higher dimensions is similar.
Suppose we know that the optimal size-$k$ committee of $E$ has score $\sigma^\star$.
%We reduce our problem to a \textit{geometric hitting set} problem as follows.
We show how to compute a committee of size $(1+\epsilon) k$ whose score is at most $\sigma^\star$.
For each voter $v \in V$, we consider the smallest disk $R_v$ centered at $v$ containing its closest  $\sigma^\star$ candidates in $\mathcal{C}$.
Let $\mathcal{R} = \{R_v: v \in V\}$.
A \textit{hitting set} of the set system $(\mathcal{C},\mathcal{R})$ is a subset $H \subseteq \mathcal{C}$ such that $H \cap R \neq \emptyset$ for all $R \in \mathcal{R}$.
If a committee has score at most $\sigma^\star$, then it must be a hitting set of $(\mathcal{C},\mathcal{R})$, and since there is a size-$k$ committee with score $\sigma^\star$, the minimum hitting set of $(\mathcal{C},\mathcal{R})$ has size at most $k$.
By using the PTAS for disk hitting set \cite{mustafa2010improved}, we can compute a hitting set for $(\mathcal{C},\mathcal{R})$ of size $(1+\epsilon) k$, which is the desired committee.
Since we do not know the value of $\sigma^\star$, we simply try all values from $1$ to $m$, and pick the smallest one for which we have a hitting set of size $(1+\epsilon) k$.

In higher dimensions, we can apply the same approach.
The only difference is that we do not have a PTAS for ball hitting set in $\mathbb{R}^d$ for $d \geq 3$.
But we can apply the greedy hitting set algorithm to compute a hitting set of size $O(k \log m)$ if a size-$k$ hitting set exists.
Therefore, the above algorithm computes a committee of size $O(k \log m)$ whose score is at most the score of any size-$k$ committee.
\end{proof}

On the other hand, we prove that if we can only increase the committee size by an \emph{additive} constant, we are not able to achieve any good approximation for the minimax score.

\begin{theorem} \label{thm-k+c-inapprox}
Let $\alpha,\epsilon>0$ be  constants.
Given a Euclidean election $E = (\mathcal{C},V)$ in $\mathbb{R}^d$ for $d \geq 2$ and a number $k \geq 1$, it is \NPH{} to compute a committee of size $k+\alpha$ whose score is at most $|\mathcal{C}|^{1-\epsilon} \cdot \sigma^\star$, where $\sigma^\star$ is the minimum score of a size-$k$ committee.
\end{theorem}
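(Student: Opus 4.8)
The plan is to reduce from \textsc{Planar Monotone 3-SAT} by \emph{amplifying} the gap instance of Theorem~\ref{thm-hardtoappx} into many spatially separated copies, so that an additive surplus of $\alpha$ in the committee size is rendered harmless by pigeonhole. We may assume $\alpha$ is a positive integer. Given a formula $\phi$, build the Euclidean election $E_\phi$ exactly as in the proof of Theorem~\ref{thm-hardtoappx}, with its large constant $w$ (to be fixed below), its committee parameter $\kappa := N+n-3m$, and the guarantee of Lemma~\ref{lem-hard-appx-score}: if $\phi$ is satisfiable there is a size-$\kappa$ committee of score $\le 4$, and otherwise every size-$\kappa$ committee has score $\ge (nm)^w$. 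Recall $E_\phi$ lives in a bounded box of side $O(nm)$. Now place $\alpha+1$ pairwise-disjoint copies $E_1,\dots,E_{\alpha+1}$ of $E_\phi$, where copy $j$ is translated by $(jL,0,\dots,0)$ with $L$ chosen so that $L - O(nm)$ exceeds the diameter $O(nm)$ of a single copy; concretely $L=\Theta(nm)$ with a large enough constant works. Let $E^\star=(\mathcal{C}^\star,V^\star)$ be the union and output the instance $(E^\star, K)$ with $K := (\alpha+1)\kappa$.

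For the ``yes'' direction, if $\phi$ is satisfiable then the union of the per-copy score-$\le 4$ committees is a size-$K$ committee of $E^\star$ with score $\le 4$, so $\sigma^\star \le 4$. For the ``no'' direction, suppose $\phi$ is unsatisfiable and let $T^\star \subseteq \mathcal{C}^\star$ with $|T^\star| = K+\alpha$ be arbitrary. Since $(\alpha+1)(\kappa+1) = K+\alpha+1 > K+\alpha$, some copy $E_i$ contains at most $\kappa$ points of $T^\star$; set $T' := T^\star \cap \mathcal{C}_i$. The geometric key point is that for every voter $v$ in copy $E_i$, each candidate lying outside $E_i$ has distance $\ge L - O(nm)$ to $v$, which is larger than the distance from $v$ to \emph{every} candidate inside $E_i$ (those distances are all $O(nm)$); hence in $v$'s ranking over $\mathcal{C}^\star$ all out-of-copy candidates are ranked below all $|\mathcal{C}_i|$ in-copy candidates, and the order among the in-copy candidates is exactly that of the stand-alone election $E_i$. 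Consequently, if $T' \neq \emptyset$ then $\sigma_v(T^\star)$ equals $v$'s stand-alone score under $T'$ in $E_i$ for every voter $v$ of $E_i$, so $\sigma(T^\star) \ge \sigma_{E_i}(T')$; and since $|T'|\le\kappa$, extending $T'$ with arbitrary further candidates of $E_i$ to size exactly $\kappa$ only decreases scores, so Lemma~\ref{lem-hard-appx-score} gives $\sigma_{E_i}(T') \ge (nm)^w$. If $T'=\emptyset$, then every voter of $E_i$ has $\sigma_v(T^\star) > |\mathcal{C}_i| \ge (nm)^w$. In all cases $\sigma(T^\star) \ge (nm)^w$.

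It remains to fix $w$ and extract the inapproximability statement. By construction $|\mathcal{C}^\star| = (\alpha+1)|\mathcal{C}_i| = (\alpha+1)\bigl((4N+4n-11m) + (nm)^w N\bigr) = O\bigl((nm)^{w+1}\bigr)$, since $N=O(nm)$ and $\alpha$ is constant. Choose the constant $w$ so large that $(w+1)(1-\epsilon) < w$, i.e.\ $w > \tfrac{1}{\epsilon} - 1$; then $4\,|\mathcal{C}^\star|^{1-\epsilon} = O\bigl((nm)^{(w+1)(1-\epsilon)}\bigr) < (nm)^w$ for all sufficiently large $nm$, while for instances below the resulting constant size bound PM-3SAT is decided in constant time. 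Suppose for contradiction a polynomial-time algorithm always returns a size-$(K+\alpha)$ committee of $E^\star$ of score at most $|\mathcal{C}^\star|^{1-\epsilon}\cdot\sigma^\star$. On a satisfiable $\phi$ this committee has score at most $4\,|\mathcal{C}^\star|^{1-\epsilon} < (nm)^w$, whereas on an unsatisfiable $\phi$ every size-$(K+\alpha)$ committee---in particular the one returned---has score $\ge (nm)^w$; thus comparing the returned score with the threshold $(nm)^w$ decides PM-3SAT in polynomial time, so the stated approximation task is \NPH{}. For $d>2$ one simply embeds the planar construction into the first two coordinates. I expect the step needing the most care is the ``out-of-copy candidates are useless'' claim---i.e.\ pinning down $L$ relative to the copy diameter so that every cross-copy candidate is provably ranked past all $|\mathcal{C}_i|$ in-copy candidates---together with the bookkeeping that sets $w$ as a function of $\epsilon$ (and $\alpha$) so the approximation factor stays strictly below the amplified gap.
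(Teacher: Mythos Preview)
Your proposal is correct and follows essentially the same approach as the paper: take $\alpha+1$ well-separated copies of the gap instance from Theorem~\ref{thm-hardtoappx}, set $K=(\alpha+1)\kappa$, and use pigeonhole to force some copy down to $\le\kappa$ committee members. You are in fact more careful than the paper on several points---the explicit choice of $w$ in terms of $\epsilon$, the separation parameter $L$, and the edge case $T'=\emptyset$---all of which the paper glosses over or leaves implicit.
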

\begin{proof}
Our proof is a minor modification of the proof of Theorem~\ref{thm-hardtoappx}; hence, we describe the construction and only give the main idea of the proof of equivalence.

We begin by constructing an election instance $E' = (\mathcal{C}', V')$ as described in the proof of Theorem~\ref{thm-hardtoappx} and do the further modifications as follows:
Let $E = (\mathcal{C},V)$ be an election instance constructed by taking $\alpha+1$ distinct copies of $E'$, i.e., $\mathcal{C} = \mathcal{C}_1' \cup \mathcal{C}_2' \cup \cdots \cup \mathcal{C}_{\alpha+1}'$ and $V = V_1' \cup V_2' \cup \cdots \cup V_{\alpha+1}'$.
In the Euclidean embedding of the election $E$, we keep the Euclidean embeddings of the individual copies of $E'$ the same as before but place these $\alpha+1$ copies far away from each other so that any voter $v \in V_i'$ prefers all candidates in its copy to any candidate in any other copy.
Recall that in $E'$, the committee size is $k'=N + n -3m$ according to the construction in Theorem~\ref{thm-hardtoappx} (where $N$, $n$, and $m$ are the total number of pieces, variable, and clauses, in the orthogonal embedding of the PM-3SAT instance, respectively).
For the election $E$ we set the committee to $k = (\alpha+1)k'$. 
This completes the construction of the reduced instance.

The main idea here is that any committee $T$ for the reduced election instance $E$ can be viewed as the disjoint union of candidates in the committees ($T_i'$) for each individual election $E_i'$.
This is because a voter $v \in V_i'$ prefers any candidate $c_i \in \mathcal{C}_i'$ to a candidate $c_j \in \mathcal{C}_j'$ for all $i, j \in [\alpha]$ with $i \neq j$.
Hence, even if we select a committee of size at most $k+\alpha$ in $E$ (i.e., $|T| \leq k+\alpha$), at least one of the copies of $E'$ (in $E$) will have a committee of size $k'$ (i.e. there exists $T_i'$ such that $|T_i'| \leq k'$).
Without loss of generality, assume $E_1'$ is one such copy.
Using Theorem~2, we know that it is \NPH{} to achieve even a $|\mathcal{C}_1'|^{1-\epsilon}$-approximation for Euclidean minimax committee in $E_1'$ for any $\epsilon > 0$.
Since we consider minimax Chamberlin-Courant rule, the approximation factor in $E$ is the max over the approximation factors in all copies of $E'$ in $E$.
This implies that a polynomial-time algorithm cannot achieve a $|\mathcal{C}_1'|^{1-\epsilon}$-approximation for any $\epsilon > 0$ in $E$.
Since $|\mathcal{C}| \leq (\alpha+1)|\mathcal{C}_1'|$, where $\alpha$ is a constant, this completes the proof of Theorem~\ref{thm-k+c-inapprox}.
\end{proof}

\section[delta-optimal]{$\delta$-Optimal Committees}
\label{sec:delta-optimal-committees}

In the previous section, we showed that for any instance we can find a minimax committee of optimal score if we increase the committee size by a small (multiplicative) factor. 
In this section, we suggest an alternative way to assess the approximation quality
\emph{while keeping the committee size $k$.}

To introduce this criterion, let us consider an election $E = (\mathcal{C},V)$ and suppose the optimal score of a size-$k$ committee is $\sigma^\star$.
In our approximation, we are looking for a size-$k$ committee in which the candidate closest to each voter $v \in V$ has rank not much larger than $\sigma^\star$ in the preference list of $v$.
Our hardness proof shows that in general this is not possible because there may be many candidates at roughly the same distance from $v$, \emph{but with a large difference in ranks}, and any polynomial time algorithm is bound to end up with a bad minimax score for some $v$.
A natural way to get rid of this pathological situation is to treat
two candidates with roughly the same distance from $v$ as if they have similar ranks.

With this motivation, we introduce the following \textit{$\delta$-optimality criterion}.
We say a committee $T \subseteq \mathcal{C}$ of size $k$ is \emph{$\delta$-optimal}, for $\delta \geq 1$, if for each voter $v \in V$ the distance from $v$ to its closest candidate in $T$ is at most $\delta$ times the distance from $v$ to its rank-$\sigma^\star$ candidate.

We now show how to compute a $3$-optimal committee in polynomial time for any $d$-Euclidean election.
Let $E = (\mathcal{C},V)$ be a Euclidean election and $k \geq 1$ be the desired committee size.
For convenience, let us first assume that the optimal score $\sigma^\star$ of a size-$k$ committee of $E$ is known.
For each voter $v \in V$, define $d_v^\star$ as the distance from $v$ to the rank-$\sigma^\star$ candidate in the preference list of $v$.
We say a voter $v$ is \textit{satisfied} with a subset $T \subseteq \mathcal{C}$ if there exists a candidate $c \in T$ such that $\text{dist}(c,v) \leq 3 d_v^\star$.
We denote by $S[T] \subseteq V$ the subset of voters satisfied with $T$.
Then a committee $T \subseteq \mathcal{C}$ (of size $k$) is $3$-optimal if every $v \in V$ is satisfied with $T$.
Our algorithm begins with an empty committee $T = \emptyset$ and iteratively adds new candidates to $T$ using the following three steps until $S[T] = V$:
\begin{enumerate}
    \item $\hat{v} \leftarrow \arg \min_{v \in V \backslash S[T]} d_v^\star$.
    \item $\hat{c} \leftarrow$ a candidate within distance $d_{\hat{v}}^\star$ from $\hat{v}$.
    \item $T \leftarrow T \cup \{\hat{c}\}$.
\end{enumerate}
In words, in each iteration, we find the \emph{unsatisfied} voter $\hat{v}$ with the minimum $d_{\hat{v}}^\star$, and then add to $T$ a (arbitrarily chosen) candidate $\hat{c} \in \mathcal{C}$ within distance $d_{\hat{v}}^\star$ from $\hat{v}$.
The algorithm terminates when $S[T] = V$, and so all voters are satisfied with $T$ at the end.

We only need to show that $|T| \leq k$; if $|T| < k$, we can always add extra candidates 
while keeping all voters satisfied.
Consider an optimal size-$k$ committee $T_\text{opt} = \{c_1,\dots,c_k\}$, with score $\sigma^\star$, and let $V_i \subseteq V$ be the subset of voters whose closest candidate in $T_\text{opt}$ is $c_i$. Thus, $V_1,\dots,V_k$ form a partition of $V$.
We say two voters $v,v' \in V$ are \textit{separated} if they belong to different $V_i$'s.

Suppose our algorithm terminates in $r$ iterations. We will show that $r \leq k$.
Let $\hat{v}_j$ (resp., $\hat{c}_j$) be the voter $\hat{v}$ (resp., the candidate $\hat{c}$) chosen in the $j$-th iteration, and let $T_j$ be the committee $T$ at the beginning of the $j$-th iteration. We claim the following property of our greedy algorithm.% (the proof is presented in the supplementary material).

\begin{lemma}
The voters $\hat{v}_1,\dots,\hat{v}_{r}$ are pairwise separated.
\end{lemma}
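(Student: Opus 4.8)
The goal is to show that the voters $\hat v_1, \dots, \hat v_r$ chosen by the greedy algorithm are pairwise separated, i.e. no two of them lie in the same part $V_i$ of the partition induced by the optimal committee $T_{\mathrm{opt}} = \{c_1, \dots, c_k\}$. Once this is established, the bound $r \le k$ is immediate (there are only $k$ parts), which finishes the argument that $|T| \le k$. The plan is to argue by contradiction: suppose $\hat v_a$ and $\hat v_b$ with $a < b$ both belong to the same part $V_i$, and derive that $\hat v_b$ would already have been satisfied at the start of iteration $b$, contradicting the fact that the algorithm picks $\hat v_b$ from $V \setminus S[T_b]$.

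First I would record the key distance inequality. Since $\hat v_a, \hat v_b \in V_i$, the optimal committee member $c_i$ is their common closest candidate in $T_{\mathrm{opt}}$, and because $T_{\mathrm{opt}}$ has score $\sigma^\star$, we have $\mathrm{dist}(\hat v_a, c_i) \le d_{\hat v_a}^\star$ and $\mathrm{dist}(\hat v_b, c_i) \le d_{\hat v_b}^\star$. Next I would use the ordering in which the algorithm selects voters: at each step it takes the unsatisfied voter minimizing $d_v^\star$, so the selected quantities are nondecreasing, $d_{\hat v_1}^\star \le d_{\hat v_2}^\star \le \cdots \le d_{\hat v_r}^\star$; in particular $d_{\hat v_a}^\star \le d_{\hat v_b}^\star$. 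Hence $\mathrm{dist}(\hat v_a, c_i) \le d_{\hat v_b}^\star$.

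Now consider the candidate $\hat c_a$ added in iteration $a$: by construction $\mathrm{dist}(\hat v_a, \hat c_a) \le d_{\hat v_a}^\star \le d_{\hat v_b}^\star$. By the triangle inequality,
\[
\mathrm{dist}(\hat v_b, \hat c_a) \;\le\; \mathrm{dist}(\hat v_b, c_i) + \mathrm{dist}(c_i, \hat v_a) + \mathrm{dist}(\hat v_a, \hat c_a) \;\le\; d_{\hat v_b}^\star + d_{\hat v_b}^\star + d_{\hat v_b}^\star \;=\; 3\, d_{\hat v_b}^\star.
\]
Since $a < b$, the candidate $\hat c_a$ is already in $T_b$, so $\hat v_b$ is satisfied with $T_b$, i.e. $\hat v_b \in S[T_b]$. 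This contradicts the fact that in iteration $b$ the algorithm chooses $\hat v_b \in V \setminus S[T_b]$. Therefore no two of $\hat v_1, \dots, \hat v_r$ share a part $V_i$, proving the lemma.

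The step that requires the most care is getting the three distances in the triangle-inequality chain all bounded by the \emph{same} quantity $d_{\hat v_b}^\star$: two of them come directly from membership in $V_i$ together with the optimality of $T_{\mathrm{opt}}$, but the crucial third one relies on the monotonicity of the $d^\star$ values along the greedy selection order (so that $d_{\hat v_a}^\star \le d_{\hat v_b}^\star$). This monotonicity is exactly why the algorithm processes unsatisfied voters in increasing order of $d_v^\star$, and it is the only nontrivial ingredient; everything else is the triangle inequality and bookkeeping about which candidates belong to $T_b$.
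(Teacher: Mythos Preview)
Your proof is correct and follows essentially the same route as the paper: bound $\text{dist}(\hat v_b,\hat c_a)$ by the triangle inequality through the optimal center $c_i$, with all three legs at most $d_{\hat v_b}^\star$, to conclude $\hat v_b\in S[T_b]$ and obtain a contradiction. The only cosmetic difference is that the paper argues directly that \emph{every} voter in $V_i$ not yet satisfied at iteration $a$ becomes satisfied by $\hat c_a$ (so $V_i\subseteq S[T_{a+1}]$), whereas you specialize to $\hat v_b$ and invoke monotonicity of $d_{\hat v_j}^\star$ along the greedy order; both yield the needed inequality $d_{\hat v_a}^\star\le d_{\hat v_b}^\star$, since the set of unsatisfied voters only shrinks.
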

\begin{proof}
Let $j,j' \in [r]$ such that $j \neq j'$, and we want to show that $\hat{v}_j$ and $\hat{v}_{j'}$ are separated.
Without loss of generality, assume $j<j'$ and
let $i \in [k]$ be such that $\hat{v}_j \in V_i$.
We claim that $V_i \subseteq S[T_{j+1}]$.
Consider a voter $v \in V_i$.
If $v \in S[T_j]$, then $v \in S[T_{j+1}]$.
So assume $v \in V_i \backslash S[T_j]$.
In this case, $d_{\hat{v}_j}^\star \leq d_v^\star$.
Since the score of $T_\text{opt}$ is $\sigma^\star$, we have $\text{dist}(c_i,v) \leq d_v^\star$ and $\text{dist}(c_i,\hat{v}_j) \leq d_{\hat{v}_j}^\star \leq d_v^\star$.
Furthermore, by the construction of $\hat{c}_j$, we have $\text{dist}(\hat{c}_j,\hat{v}_j) \leq d_{\hat{v}_j}^\star \leq d_v^\star$.
It follows that
\begin{equation*}
    \text{dist}(\hat{c}_j,v) \leq \text{dist}(\hat{c}_j,\hat{v}_j) + \text{dist}(\hat{v}_j,c_i) + \text{dist}(c_i,v) \leq 3 d_v^\star.
\end{equation*}
Therefore, $v$ is satisfied with $T_{j+1} = T_j \cup \{\hat{c}_j\}$, i.e., $v \in S[T_{j+1}]$.
Based on this, we can deduce that $\hat{v}_{j'} \notin V_i$, because $\hat{v}_{j'} \notin S[T_{j'}]$ and $V_i \subseteq S[T_{j+1}] \subseteq S[T_{j'}]$.
Since $\hat{v}_j \in V_i$ and $\hat{v}_{j'} \notin S[T_{j'}]$, $\hat{v}_j$ and $\hat{v}_{j'}$ are separated.
\end{proof}

%Due to limited space, we defer the proof to the appendix.

Thus, the voters $\hat{v}_1,\dots,\hat{v}_{r}$ belong to different $V_i$'s, which implies that
$r \leq k$ and $|T| = r \leq k$, proving the correctness of our algorithm.

Finally, notice that in our algorithm we assumed $\sigma^\star$ is known, but this assumption is easy to 
get rid of.
We can try each possible value from $1$ to $m = |\mathcal{C}|$, and
choose the smallest number $\sigma^\star \in [m]$ for which the algorithm returns a
committee of size at most $k$.
Thus, we proved the following.

\begin{theorem}
\label{thm:MCC-delta-opt}
Given a Euclidean election $E = (\mathcal{C}, V)$ in any dimension and $k \geq 1$, one can compute a $3$-optimal committee of size $k$ in polynomial time.
\end{theorem}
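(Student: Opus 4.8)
The plan is to design a greedy algorithm that keeps the committee size at exactly $k$ while guaranteeing that every voter $v$ has some committee member within distance $3 d_v^\star$, where $d_v^\star$ is the distance from $v$ to its rank-$\sigma^\star$ candidate. First I would assume $\sigma^\star$ is known, removing this assumption at the end by trying all $m$ possible values. Call a voter $v$ \emph{satisfied} by a set $T \subseteq \mathcal{C}$ if some $c \in T$ has $\text{dist}(c,v) \le 3 d_v^\star$. Starting from $T = \emptyset$, the algorithm repeatedly: (i) picks the still-unsatisfied voter $\hat v$ minimizing $d_{\hat v}^\star$; (ii) adds to $T$ any candidate $\hat c$ with $\text{dist}(\hat c,\hat v) \le d_{\hat v}^\star$ (such a candidate exists, since $\hat v$'s rank-$\sigma^\star$ candidate lies at distance $d_{\hat v}^\star$); and halts once every voter is satisfied. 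By construction the returned $T$ makes all voters satisfied, hence is $3$-optimal; if $|T| < k$ we simply pad $T$ with arbitrary extra candidates without unsatisfying anyone.

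The crux is bounding the number of iterations by $k$. I would fix an optimal size-$k$ committee $T_\text{opt} = \{c_1,\dots,c_k\}$ of score $\sigma^\star$ and let $V_i$ be the set of voters whose nearest member of $T_\text{opt}$ is $c_i$, so $V_1,\dots,V_k$ partition $V$. The key claim is that the voters $\hat v_1,\hat v_2,\dots$ chosen across iterations lie in pairwise distinct parts $V_i$, which immediately caps the number of iterations at $k$. To establish this, suppose $\hat v_j \in V_i$ is picked in iteration $j$, and take any $v \in V_i$ that is still unsatisfied at the start of iteration $j$. The greedy rule gives $d_{\hat v_j}^\star \le d_v^\star$, and since $T_\text{opt}$ has score $\sigma^\star$ we get $\text{dist}(c_i,v) \le d_v^\star$ and $\text{dist}(c_i,\hat v_j) \le d_{\hat v_j}^\star \le d_v^\star$; combined with $\text{dist}(\hat c_j,\hat v_j) \le d_{\hat v_j}^\star \le d_v^\star$, the triangle inequality along the path $\hat c_j \to \hat v_j \to c_i \to v$ yields $\text{dist}(\hat c_j,v) \le 3 d_v^\star$. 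Thus after iteration $j$ every voter of $V_i$ is satisfied, so no later iteration can select a voter from $V_i$, giving distinctness.

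The main obstacle is making this triangle-inequality estimate close with the factor exactly $3$ rather than something larger: that is precisely why the algorithm must always pick the unsatisfied voter of \emph{minimum} $d_{\hat v}^\star$, so that $d_{\hat v_j}^\star \le d_v^\star$ for every not-yet-satisfied $v \in V_i$ and all three legs of the path are bounded by $d_v^\star$. Finally I would discharge the assumption that $\sigma^\star$ is known: run the procedure for each guessed value $s \in \{1,\dots,m\}$ used in place of $\sigma^\star$, and output the committee obtained for the smallest $s$ for which the procedure halts with $|T| \le k$. Since the true $\sigma^\star$ is one such value, this succeeds, and the whole computation is clearly polynomial time.
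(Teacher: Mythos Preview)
Your proposal is correct and follows essentially the same approach as the paper: the same greedy algorithm (repeatedly pick the unsatisfied voter with minimum $d_{\hat v}^\star$ and add any candidate within distance $d_{\hat v}^\star$), the same partition of $V$ by nearest member of an optimal committee $T_\text{opt}$, the same triangle-inequality argument along $\hat c_j \to \hat v_j \to c_i \to v$ to show all of $V_i$ becomes satisfied after iteration $j$, and the same enumeration over $s \in [m]$ to discharge the assumption that $\sigma^\star$ is known. There is no meaningful difference between your argument and the paper's.
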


Complementary to the above algorithmic result, we can also show the following hardness result.

\begin{theorem}
\label{thm:delta-opt-hardness}
For any $\delta < 2$, unless P=NP, there is no polynomial time algorithm to compute a $\delta$-optimal committee of size $k$ for a given Euclidean election in $\mathbb{R}^d$ for $d \geq 2$.
\end{theorem}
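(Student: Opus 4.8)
The plan is to reuse the NP-hardness reduction from Theorem~\ref{thm:MCC-NPH} (the PM-3SAT construction) and argue about \emph{distances} rather than ranks. Recall that in that construction every point carries both a voter and a candidate, and $\phi$ is satisfiable iff $E$ has a size-$k$ committee of score at most $4$; moreover, Corollary~\ref{cor-noclcand} tells us that any such committee contains exactly one candidate per variable gadget and per piece gadget and \emph{no} candidate in a clause gadget. The key observation is that when $\phi$ is satisfiable we can take $\sigma^\star \le 4$, so $d_v^\star$ is the distance from $v$ to its $4$-th closest candidate, which in the gadget construction is a \emph{small} quantity (on the order of the $0.01$--$1$ scale used inside a gadget). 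A $\delta$-optimal committee with $\delta < 2$ must therefore place, for every voter $v$, a candidate within distance $\delta\, d_v^\star < 2 d_v^\star$ of $v$. The first step is to verify that this geometric constraint is already strong enough to force the same combinatorial structure that score~$\le 4$ forced in Theorem~\ref{thm:MCC-NPH}: one candidate per variable gadget, one per piece gadget, none in the clause gadgets (because $k = N+n-3m$ exactly equals the number of variable plus piece gadgets). For this I would recompute, for each voter in a gadget, the distance $d_v^\star$ to its rank-$\sigma^\star$ candidate and the distance to the nearest candidate outside its own gadget, and check that the ratio exceeds $2$ — the construction's $0.01/0.02$ spacing near variable reference points and $0.49/0.8/0.9/1$ spacing on pieces was chosen with exactly this kind of slack in mind, so this should go through, possibly after rescaling some constants.

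Once the structural lemma (committee has one candidate per variable/piece gadget, none per clause gadget) is re-established under the $\delta$-optimality hypothesis, the rest of the argument is essentially identical to the ``only if'' direction of Theorem~\ref{thm:MCC-NPH}: from the committee we read off an assignment $\pi$ by looking at whether the chosen candidate in each variable gadget lies above or below the $x$-axis, and then the chain-propagation argument of Lemma~\ref{lem-true} (again re-run with distance thresholds $\delta\, d_v^\star$ in place of rank thresholds) shows that every clause contains a satisfied literal. Conversely, if $\phi$ is satisfiable, the explicit committee $T$ built in the ``if'' part of Theorem~\ref{thm:MCC-NPH} has score $\le 4$, hence places a candidate within $d_v^\star$ of every $v$ (indeed within distance well below $d_v^\star$ for the relevant voters), so it is $1$-optimal and in particular $\delta$-optimal. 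Thus a polynomial-time algorithm producing a $\delta$-optimal size-$k$ committee would decide PM-3SAT.

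One subtlety I would handle carefully: $\delta$-optimality is defined relative to the \emph{true} optimum $\sigma^\star$, which the algorithm is not told; but since the ``if'' direction gives a committee of score $\le 4$, we always have $\sigma^\star \le 4$ in the satisfiable case, and in the unsatisfiable case no size-$k$ committee of score $\le 4$ exists, so $\sigma^\star \ge 5$ there — either way the algorithm's output must satisfy the distance bound with the $d_v^\star$ determined by the instance's own $\sigma^\star$, and we only need the satisfiable-case bound (where $\sigma^\star\le 4$ makes $d_v^\star$ small) for the hardness argument, plus the fact that in the unsatisfiable case no $1$-optimal (indeed no $\delta$-optimal for any finite $\delta$, by the structural lemma) committee can exist. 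The main obstacle is the bookkeeping of Step~1: confirming that the factor $2$ (not just some larger constant) is the right threshold, i.e.\ that for the critical voters $v_1$ on each piece and $\hat v_1$ near each variable reference point, the nearest ``wrong'' candidate sits at distance strictly more than $2 d_v^\star$ while the intended candidate sits within $d_v^\star$. If the original $0.47$--$1$ spacing does not quite give a clean factor~$2$, I would perturb the chosen distances (the construction has ample freedom) so that the ratio is bounded below by $2+\Omega(1)$, which both proves the stated $\delta<2$ bound and leaves a matching-gap comment consistent with the $\delta=3$ algorithm of Theorem~\ref{thm:MCC-delta-opt}.
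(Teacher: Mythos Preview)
Your overall strategy---reuse the PM-3SAT reduction and engineer a factor-$2$ distance gap---is the same as the paper's. Two points, however, need correction.

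First, your handling of the unsatisfiable case is wrong. You assert that ``in the unsatisfiable case no $1$-optimal (indeed no $\delta$-optimal for any finite $\delta$) committee can exist,'' but a $\delta$-optimal committee exists for every $\delta \geq 1$ in every instance: the optimal committee itself is $1$-optimal by definition, since each voter has a committee member at distance at most $d_v^\star$. What differs between the satisfiable and unsatisfiable cases is the value of $\sigma^\star$ (and hence of each $d_v^\star$), not the existence of $\delta$-optimal committees; in the unsatisfiable case $\sigma^\star$ is larger, so $d_v^\star$ is larger and $\delta$-optimality is a \emph{weaker} constraint, so your distance-threshold structural lemma and chain argument need not hold there. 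The paper sidesteps this by running a different test on the algorithm's output: compute $d_{\max} = \max_{v} \min_{c \in T} \text{dist}(v,c)$ and check whether $d_{\max} < 2/3$. In the satisfiable case $\sigma^\star \leq 3$, so $d_v^\star \leq 1/3$ and any $\delta$-optimal $T$ with $\delta < 2$ has $d_{\max} < 2/3$; in the unsatisfiable case no size-$k$ committee whatsoever can have $d_{\max} < 2/3$, because by the gadget geometry such a committee would give every voter one of its top three candidates and hence have score $\leq 3$, contradicting unsatisfiability.

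Second, ``perturb the constants'' understates what the paper actually does. Rather than tweak the $0.49/0.8/0.9/1$ spacing (where, for example, the voter at $0.49$ has its fourth and fifth candidates at distances $0.49$ and $0.51$---nowhere near a factor-$2$ gap), the paper rebuilds each piece and variable gadget with exactly three evenly spaced points (at $1/6$, $1/2$, $5/6$ along the piece), drops the candidate at each clause reference point (keeping only a voter there), and resets $k = N - 3m$. The payoff is a uniform structural fact (Observation~\ref{obs-top-three-choices}): every voter's three nearest candidates lie within distance $1/3$ and its fourth lies at distance exactly $2/3$. This single observation replaces the need to re-derive distance analogues of Lemma~\ref{lem-onecand} and Lemma~\ref{lem-true}; once you have it, the $d_{\max} < 2/3$ test and the iff of Lemma~\ref{lem-dv<2/3} are immediate.
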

\begin{proof}
We will show that if there is a polynomial time algorithm which computes a $\delta$-optimal committee for $\delta < 2$, then it can be used to design a polynomial-time procedure to distinguish between satisfiable and unsatisfiable instances of PM-3SAT.

First, we reduce a PM-3SAT instance $\phi$ to a $d$-dimensional Euclidean election.
Our construction is a slight modification of the one used in Theorem~\ref{thm:MCC-NPH}. 
In particular, for each variable $x_i$, let $s_i$ be the piece adjacent to the reference point $\hat{x_i}$ placed above it.
We modify $s_i$ as follows: First, we remove all points (along with the candidates/voters placed on them) from $s_i$.
Next, we move $\hat{x_i}$ up to a distance $0.5$ above $s^-_i$ so that $\hat{x_i}$ lies at the midpoint of $s_i$, i.e., the $x$-axis now passes through the midpoint of $s_i$.
Furthermore, we change the variable gadget to only have three points: one at the reference point $\hat{x_i}$, and one above and one below it at a distance of $1/6$.
We put a candidate and a voter at each of these three points.
The total number of candidates/voters in the variable gadgets is $3n$.
Moreover, for each clause gadget $z_i$, we place only a voter at the clause reference point $\hat{z_i}$.
Overall, the clause gadgets contain $m$ voters (and no candidates).
Finally, we change each nonempty piece gadget $s$ to only contain three points at distances $1/6, 1/2, 5/6$ from $s^-$, and we place a candidate and a voter at each of these three points.
The total number of candidates/voters in the piece gadgets is $3(N - 3m - n)$ where $N$ is the total number of pieces in the orthogonal embedding of the PM-3SAT instance.
Recall that for each of the $m$ clauses, the three pieces adjacent to the clause reference point $\hat{z_i}$ are empty, and for each variable $x_i$, there is a piece ($s_i$) which only contains points corresponding to a variable gadget.

Overall, we obtain an election $E = (\mathcal{C}, V)$ that consists of $3N - 8m$ voters and $3N - 9m $ candidates.
We now set the desired committee size to be $k = N -3m$.
Using similar arguments to the argument of equivalence for the reduction in Theorem~1, we can show that \emph{$E$ has a committee of size $k$ with score at most $3$ iff the PM-3SAT instance is satisfiable}.
Next, to show that it is unlikely to have an efficient algorithm to compute $\delta$-optimal committees for any $\delta < 2$, we now state the main structural property of our construction.

\begin{observation} 
\label{obs-top-three-choices}
In the constructed election $E = (\mathcal{C},V)$, for all voters $v \in V$, their closest three candidates are within the distance $1/3$, and the distance to the fourth ranked candidate is $2/3$.
\end{observation}

Let $\delta < 2$ be a constant and let $P$ be a polynomial-time algorithm which computes a $\delta$-optimal committee.
We will show that $P$ can distinguish satisfiable and unsatisfiable instances of PM-3SAT in polynomial time.
Consider a reduced election instance $E = (\mathcal{C},V)$ constructed from a PM-3SAT instance $\phi$, and let $T$ be the $\delta$-optimal committee returned by $P$.
For a committee $T$, we compute the distance $d_{max}$ which is the maximum distance of a voter $v \in V$ from its most preferred candidate in $T$, i.e. $d_{max} = \max_{v \in V} (\min_{c \in T} \text{dist}(v,c))$.
Clearly, given a committee $T$, $d_{max}$ can be computed in a polynomial time.

\begin{lemma} 
\label{lem-dv<2/3}
The instance $\phi$ is satisfiable iff $d_v < 2/3$.
\end{lemma}
\begin{proof}
We first show that if $\phi$ is a satisfiable PM-3SAT instance then $d_{max} < 2/3$.
Let $E = (\mathcal{C},V)$ be the election constructed using $\phi$.
Recall that if $\phi$ is satisfiable, then $\sigma^* \leq 3$ in the reduced instance $E$ (i.e., the optimal rank in $E$ is at most 3).
We now use Observation~\ref{obs-top-three-choices} to conclude that $d_v^* \leq 1/3$ for all voters $v \in V$.
Hence, in a $\delta$-optimal committee, each voter has its representative within distance strictly less than $2/3$, i.e., $d_{max} < 2/3$.
We now turn to the other direction and show that if $d_{max} < 2/3$ then $\phi$ is satisfiable.
We use Observation~\ref{obs-top-three-choices} to conclude that each voter $v \in V$ is represented by one of its closest three candidates, i.e., $\sigma^* \leq 3$.
Therefore, $\phi$ is satisfiable.
This completes the proof for Lemma~\ref{lem-dv<2/3}.
\end{proof}

Hence, algorithm $P$ combined with our reduction gives a polynomial-time procedure to distinguish satisfiable and unsatisfiable instances of PM-3SAT which is impossible, unless P=NP.
This completes the proof of Theorem~\ref{thm:delta-opt-hardness}.
\end{proof}

\section{Concluding Remarks}

We studied the multiwinner elections in Euclidean space under minimax Chamberlin-Courant voting rules. 
First, we settle the complexity of the problem by showing it is \NPH{} for dimensions $d\geq 2$ (in contrast, the problem is known to be polynomial time solvable in 1d), but our main contribution is presenting several attractive (\& nearly-optimal) approximation bounds which are elusive in the general setting.
We believe that our algorithms are robust and will generalize to many other interesting questions%in Euclidean elections
, for instance, most of our algorithmic results (except for Thm. 5) extend to the recently studied egalitarian $k$-median rules \cite{guptaeven} when considered for the Euclidean elections. 

Our work suggests many natural directions, including resolving the complexity and approximation bounds for other important voting rules, such as the utilitarian variant of the CC rule and the general class of OWA rules \cite{skowron2015achieving,skowron2016finding} for the Euclidean elections.
% Parameterized complexity of these problems.
Investigating the parameterized complexity of the problems we considered is also an interesting question since natural parameters such as the committee size or the number of candidates are small in many real-world elections.
Another interesting avenue is to explore Euclidean elections when the positions of voters and candidates are only approximately known, for instance, each placed at some unknown point in a disk.
Such data uncertainty naturally exists in many real-world applications.

%\section*{Acknowledgments}

\bibliographystyle{abbrvnat} 
\bibliography{references}

\clearpage
\begin{appendices}

\section{Proof of Lemma~\ref{lem:equivalence-if-part}}
\label{app:proof-lemma-1}
We recall the statement of Lemma~1:

``The score of $~T$ in the election $E$ is at most $4$.''

\begin{proof}[Proof (of Lemma~1)]
For the reduced Euclidean minimax committee instance $E = (\mathcal{C},V)$ and the constructed committee $T \subseteq \mathcal{C}$, we will show that, for each voter $v \in V$, $\sigma_v(T) \leq 4$.
%For brevity, given a positive integer $\ell$, we denote the set $\{1,2,\ldots,\ell\}$ by $[\ell]$.

Let us begin with the voters in the variable gadgets.
For each variable $x_i$, recall that we place four voters (say $v_1, v_2, v_3, v_4$) and four candidates (say $c_1, c_2, c_3, c_4$) near the reference point $\hat{x}_i$.
Among the four voters (resp., candidates), $v_1, v_2$ (resp., $c_1, c_2$) are at distances $0.01$ and $0.02$ above $\hat{x}_i$, and $v_3, v_4$ (resp. $c_3, c_4$) are at distances $0.01$ and $0.02$ below $\hat{x}_i$, respectively.
We claim that for each of $v_1, v_2, v_3, v_4$, the closest four candidates to that voter belong to the set $\{c_1, c_2, c_3, c_4\}$.
This is because for each of these four voters, any candidate that belongs to its nearest piece gadget is at least at a distance $0.47$ from that voter, which is strictly more than the distance from that voter to $c_j$ for all $j\in[4]$, which is bounded by $0.04$.
%This is because, for these four voters, any candidate from a nearest piece gadget is at least at a distance $0.47$ which is strictly more than $\text{dist}(v_i, c_j)$ which is bounded by $0.04$ for $i,j \in [4]$.
Now recall that our constructed committee $T$ either includes the topmost ($c_2$) or the bottommost ($c_4$) of the four candidates in the $x_i$-gadget; hence, the score for each of $v_1, v_2, v_3$, and  $v_4$ is at most $4$.
This completes the argument for the voters in all the variable gadgets.

Next, we consider the voters in the piece gadgets.
Let $s$ be a piece and let $v_1,v_2,v_3$, and $v_4$ (resp., $c_1, c_2, c_3$, and $c_4$) be the voters (resp., candidates) at distances $0.49, 0.8, 0.9, 1$ from $s^-$, respectively.
Furthermore, let $x_i$ be the variable associated with $s$. 
We only show the proof when $s$ lies above $\hat{x}_i$ (the case when $s$ lies below $\hat{x}_i$ is similar).
We first consider the case when $\pi(x_i) = \mathsf{true}$.
In this case, $T$ includes $c_4$.
For each of the voters, $v_2, v_3$ and $v_4$, their distance from $c_4$ is at most $0.2$.
On the other hand, for each of the voters $v_2, v_3$ and $v_4$, their distance from a candidate in an adjacent piece gadget is at least at $0.49$ (and, their distance from $c_1$ is at least $0.31$).
Hence, we conclude that the closest three candidates for $v_2, v_3$ and $v_4$ belong to the set $\{c_2,c_3,c_4\}$.
Therefore, the score for each these voters is at most $3$.
Finally, for the voter $v_1$, $\text{dist}(v_1, c_4) = 0.51$.
Hence, $v_1$'s closest four candidates are $\{c_1, c_2, c_3, c_4'\}$ where $c_4'$ is either the candidate belonging to the preceding piece gadget of $s$, placed at $s^-$ or it is the candidate placed at a distance $0.02$ above the variable reference point $\hat{x}_i$ (we are in the latter case when $s$ is adjacent to $\hat{x}_i$).
Therefore, due to the way we construct $T$, in either case, $T$ includes $c_4'$.
(Note that $c_4'$ exists even when $s^-$ is a connection point.)
% The constructed committee $T$ includes $c_4'$ since it is either the candidate at $s_i^+$ for some piece $s_i$ or it is the topmost candidate corresponding to some variable $x_i$.
Hence, the score of $v_1$ is at most 4.
We now consider the case when $\pi(x_i) = \mathsf{false}$.
In this case, $T$ includes the candidate $c_3$ from $s$.
As argued above, $c_3$ belongs to the set of closest four candidates for each of the four voters $v_1, v_2, v_3, v_4$.
Hence, the score of committee $T$ for each of the four voters is at most $4$.

Finally, we consider the voters from the clause gadget.
We only show our argument for an arbitrary positive clause $z_i$, as the proof for the negative clauses is similar.
We need to show that at least one of the four closest candidates to $z_i$ belongs to $T$.
Consider the three pieces $s_1,s_2,s_3$ adjacent to the clause reference point $\hat{z}_i$.
Suppose $s_1$ is to the left of $\hat{z}_i$, $s_2$ is to the right of $\hat{z}_i$, and $s_3$ is below $\hat{z}_i$.
By our construction, we have $\hat{z}_i = s_1^+ = s_2^+ = s_3^+$.
Since $\hat{z}_i$ is a connection point and all connection points have even coordinates, $s_1^-,s_2^-,s_3^-$ are not connection points.
Therefore, there exist pieces $s_4,s_5,s_6$ such that the right endpoint of $s_4$ is $s_1^-$, the left endpoint of $s_5$ is $s_2^-$, and the top endpoint of $s_6$ is $s_3^-$.
We have $s_1^- = s_4^+$, $s_2^- = s_5^+$, and $s_3^- = s_6^+$.
In the $s_4$-gadget, there is a candidate $c_4$ with distance $1$ from $s_4^-$ (and hence located at $s_1^-$).
Similarly, there is a candidate $c_5$ in the $s_5$-gadget located at $s_2^-$ and a candidate $c_6$ in the $s_6$-gadget located at $s_3^-$.
The candidates $c_4,c_5,c_6$, together with the candidate in the $z_i$-gadget (which is located at $\hat{z}_i$), are the four candidates closest to the voter at $\hat{z}_i$, because all pieces except $s_1,\dots,s_6$ have distances at least $2$ from $\hat{z}_i$ by the fact that the $x$-coordinates (resp., $y$-coordinates) of the vertical (resp., horizontal) pieces are all even numbers.
We claim that $T$ includes at least one of $c_4, c_5$ or $c_6$.
Recall that $\pi$ is a valid satisfying assignment.
Hence, at least one of the associated variables of $s_4, s_5$ or $s_6$ is $\mathsf{true}$ under $\pi$.
Therefore, by the construction of $T$, for at least one of $s_4, s_5$ or $s_6$, $T$ contains the candidate belonging to it which is placed at distance $1$ from $s_4^-, s_5^-$ or $s_6^-$, respectively (i.e., $T$ contains one of $c_4, c_5$ or $c_6$).
This completes the proof of Lemma~1.
\end{proof}

\newpage
\section{Proof of Lemma~\ref{lem-hard-appx-score}}
\label{app:proof-lemma-4}
We recall the statement of Lemma~4:

``If $\phi$ is satisfiable, then there exists a size $k$ committee with score at most 4;
otherwise, every size $k$ committee has score at least $(nm)^w$."

\begin{proof}[Proof (of Lemma~4)]
First, to show that when $\phi$ is satisfiable, there exists a size-$k$ committee with score at most $4$, we refer the reader to the ``if'' part of the argument of equivalence for the \NPH{}ness result in Theorem~1 where we construct a committee $T$. 
It can be easily verified that the same committee $T$ also has the score at most $4$ in the instance constructed for Lemma~5.

Next, we show that if $\phi$ is unsatisfiable, then every size-$k$ committee has score at least $(nm)^w$.
For a contradiction, assume that when $\phi$ is unsatisfiable, then in the reduced instance, there is a size-k committee $T$ with score strictly less than $(nm)^w$.
We will first show the following claim.

\begin{claim} \label{clm-one-cand}
$T$ contains exactly one candidate in each variable gadget and exactly one candidate in each piece gadget.
Moreover, $T$ contains no candidate in the clause gadgets.
\end{claim}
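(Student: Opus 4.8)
The plan is to replay the proof of \cref{lem-onecand}, replacing the bound ``$\sigma_v(T)\le 4$'' by ``$\sigma_v(T)<(nm)^w$'' and exploiting that the reduction of \cref{thm-hardtoappx} only perturbs the point positions of \cref{thm:MCC-NPH} slightly, while packing $2(nm)^w$ co-located candidates next to every variable gadget and every piece gadget. Write $M=4+2(nm)^w$ for the number of candidates in such a gadget $G$ --- the four structural points, together with the two clusters of $(nm)^w$ co-located candidates (at $q^+,q^-$ for a variable gadget, at $s_3',s_4'$ for a piece gadget); note $M>(nm)^w$, and that distinct gadgets have disjoint candidate sets since every candidate is introduced for exactly one gadget.

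First I would show that $T$ meets every variable gadget and every piece gadget. For a variable gadget take the witness voter $\hat v$ at distance $0.02$ above $\hat x_i$, and for a piece gadget take the witness voter $v_2$ at distance $0.8$ from $s^-$ (either of $v_2,v_3$ works). The point to verify is that the $M$ candidates of $G$ are precisely the $M$ candidates nearest to its witness voter. Relative to \cref{lem-onecand} the only new feature is the two co-located clusters, which sit at perpendicular offset $0.05$ (variable gadget) or $0.19$ (piece gadget) from points already present in \cref{thm:MCC-NPH}; a direct computation shows each cluster lies within distance about $0.07$, respectively about $0.28$, of the witness voter, so in the witness voter's distance order all $M$ candidates of $G$ precede every candidate outside $G$ --- the latter being at distance bounded below by an absolute constant exceeding every witness-to-own-gadget distance, exactly as recorded for $v_2,v_3$ (and for the variable-gadget voters) in \cref{lem-onecand}. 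Hence, were $T\cap G=\emptyset$, the witness voter $v$ of $G$ would have all $M$ of its nearest candidates outside $T$, forcing $\sigma_v(T)\ge M+1>(nm)^w$ and contradicting $\sigma(T)<(nm)^w$.

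It then remains only to count. There are $n$ variable gadgets and $N-3m$ piece gadgets (the $3m$ pieces adjacent to clause reference points carry no gadget), all with pairwise disjoint candidate sets, so by the previous step $T$ contains at least $n+(N-3m)=k$ distinct candidates. As $|T|=k$, it follows that $T$ contains \emph{exactly} one candidate in each variable gadget and each piece gadget and has no budget left over; since the clause gadgets are disjoint from all of these, $T$ contains no candidate in any clause gadget. The only genuinely laborious point is the nearest-neighbour verification in the second paragraph --- a short list of planar distance comparisons involving the coordinates $0.47,0.8,0.9,1$ along a piece and the cluster offsets $0.05,0.19$ --- which is routine and entirely parallel to \cref{lem-onecand}; the rest is immediate.
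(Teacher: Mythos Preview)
Your proposal is correct and follows essentially the same approach as the paper: show that each variable and piece gadget contains enough nearby candidates (the $2(nm)^w$ co-located ones plus the four structural ones) that a witness voter in the gadget would have score $\geq (nm)^w$ if $T$ missed the gadget, then finish by the same counting argument. The only inconsequential differences are your choice of witness voter for piece gadgets ($v_2$ at distance $0.8$ rather than the paper's voter at $s_4$) and your slightly stronger phrasing that \emph{all} $M$ gadget candidates precede every external candidate, whereas the paper only asserts that the closest $(nm)^w$ candidates lie in the gadget.
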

\begin{proof}
We first show that $T$ contains at least one candidate in each variable gadget.

For each variable $x_i$, recall that we place four voters (say $v_1, v_2, v_3, v_4$) and four candidates (say $c_1, c_2, c_3, c_4$) near the reference point $\hat{x}_i$.
Among the four voters (resp., candidates), $v_1, v_2$ (resp., $c_1, c_2$) are at distances $0.01$ and $0.02$ above $\hat{x}_i$, and $v_3, v_4$ (resp. $c_3, c_4$) are at distances $0.01$ and $0.02$ below $\hat{x}_i$, respectively.
Moreover, let $q^+$ (resp., $q^-$) be the point to the left of $v_2$ (resp., $v_4$) at a distance $0.05$; we place $(nm)^w$ candidates at $q^+$ (resp., $q^-$).
Observe that the closest $(nm)^w$ candidates for the voters $v_1, v_2, v_3, v_4$ belong to the $x_i$-gadget.
This is because any candidate from the nearest piece gadget is at least at a distance $0.4$.
Hence, $T$ contains at least one candidate from each variable gadget.
Next, we consider a piece gadget $s$.
We first recall the construction of $s$.
Let $s_1, s_2, s_3$, and $s_4$ be the points placed at distances $0.47, 0.8, 0.9, 1$ from $s^-$, and let $s_3', s_4'$ be the points placed at a distance $0.19$ to the left of the points $s_3, s_4$, respectively.
We place a voter and a candidate at each of $s_1, s_2, s_3$ and $s_4$, and place $(nm)^w$ candidates at $s_3'$ and $s_4'$.
Notice that the closest $(nm)^w$ candidates for the voter placed at $s_4$ include the candidates placed $s_4, s_3$ and $s_4'$ (which belong to $s$).
Hence, $T$ contains at least one candidate from each piece gadget.

At this stage, we recall that the number of variable gadgets is $n$, the number of piece gadgets is $N - 3m$, and the committee size is $k = N+n-3m$.
Hence, using a counting argument, we conclude that $T$ contains exactly one candidate from each variable gadget and each piece gadget.
Clearly, $T$ cannot contain any candidate from the clause gadgets.
This completes the proof of Claim~1.
\end{proof}

Next, we strengthen Claim~1 as follows:
For each piece gadget $s$, we show that $T$ contains the candidate placed at either $s_3$ or $s_4$. 
Indeed, consider the voters placed at $s_3$ and $s_4$.
The closest $(nm)^w$ candidates for the voter at $s_3$ include the candidates placed at $s_1, s_2, s_3, s_4, s_3'$.
And, for the voter at $s_4$, it's closest $(nm)^w$ candidates include the candidates placed at $s_3, s_4, s_4'$.
By Claim~1, we know $T$ includes exactly one candidate in $s$; hence, $T$ must include a candidate placed at either $s_3$ or $s_4$ to bound the score for the voters at $s_3$ and $s_4$ simultaneously.
By a similar argument, for each variable gadget, we can show that $T$ must include $c_1, c_2, c_3$ or $c_4$ (in particular, $T$ cannot include any candidates placed at $q^+$ or $q^-$).
Therefore, we have shown that $T$ does not contain any candidate placed on one of the points with $(nm)^w$ candidates.

We now claim that we can recover a satisfying assignment $\pi$ for the PM-3SAT instance $\phi$.
We define $\pi$ as follows: For each variable $x_i$, if $T$ includes a candidate from the $x_i$-gadget placed above the $x$-axis, then $\pi(x_i) = \textsf{true}$; otherwise $\pi(x_i) = \textsf{false}$.
We can now use a similar argument to the ``only if'' part of the argument of equivalence for the \NPH{}ness result from Theorem~1 to show that $\pi$ is a satisfying assignment of $\phi$ (due to the similarity of the arguments, we skip the details).
But this is a contradiction, since we started with an unsatisfiable instance $\phi$ of PM-3SAT.
This completes the proof of Lemma~4.
\end{proof}

\end{appendices}

\end{document}